\documentclass[acmsmall,screen,nonacm]{acmart}
\usepackage{commath} 
\usepackage{graphicx}
\usepackage{makecell}
\usepackage{stmaryrd}
\usepackage{xcolor}
\usepackage{varwidth}
\usepackage{multirow}
\usepackage[multiple]{footmisc}
\usepackage{comment}
\usepackage{outlines}
\usepackage{csquotes}
\usepackage{listings, listings-rust}
\usepackage{algorithm}
\usepackage{algpseudocode}
\usepackage{pgfplots}
\usepackage{array}
\usepackage{xurl}
\usepackage{cleveref}
\usepackage{subcaption}
\usepackage{booktabs}
\usepackage{geometry}
\usepackage{longtable}
\usepackage{ltablex}
\usepackage{tikz-cd}
\usepackage{algorithmicx}

\usepackage{relsize}  

\usepackage{silence}
\pgfplotsset{compat=1.18}
\usetikzlibrary{shapes,arrows,decorations.pathmorphing,positioning,fit,calc,backgrounds}
\acmConference[unpublished]{manuscript under preparation}{July 2025}{}
\setcopyright{none}

\title[Fully Symbolic Analysis of Loop Locality]{Fully Symbolic Analysis of Loop Locality}
\subtitle{Using Imaginary Reuse to Infer Real Performance}

\author{Yifan Zhu}
\orcid{0009-0000-6718-7787}
\affiliation{
  \institution{University of Rochester}
  \city{Rochester}
  \state{New York}
  \country{USA}
}

\author{Yekai Pan}
\orcid{0009-0000-6718-7787}
\affiliation{
  \institution{University of Rochester}
  \city{Rochester}
  \state{New York}
  \country{USA}
}

\author{Chen Ding}
\orcid{0000-0003-4968-6659}
\affiliation{
  \institution{University of Rochester}
  \city{Rochester}
  \state{New York}
  \country{USA}
}

\author{Yanghui Wu}
\affiliation{
  \institution{University of Rochester}
  \city{Rochester}
  \state{New York}
  \country{USA}
}

\begin{document}

\begin{abstract}
    This paper presents a new theory of locality and its compiler support.  The theory is fully symbolic and derives locality as polynomials, and the compiler analysis supports affine loop nests.  They derive cache-performance scaling in quadratic and reciprocal expressions and are more general and precise than empirical scaling rules.
    
    Evaluated on a benchmark suite of 41 scientific kernels and tensor operations, the compiler requires an average of 41 seconds to derive the locality polynomials. After derivation, predicting the cache miss count for any given input size and cache configuration takes less than a millisecond. 
    Across all tests--with and without loop fusion--the accuracy in the data movement prediction is 99.6\%, compared to simulated set-associative L1 data cache.  
    

\end{abstract}

\maketitle

\lstset{
  language=C,
  basicstyle=\ttfamily\small,
  keywordstyle=\color{blue}\bfseries,
  commentstyle=\color{gray},
  numberstyle=\tiny\color{gray},
  numbersep=5pt,
  frame=single,
  tabsize=2,
  showstringspaces=false
}

\def \fp {\textit{fp}}
\def \ri {\textit{ri}}
\def \mr {\textit{mr}}
\def \hr {\textit{hr}}
\def \rd {\textit{rd}}
\def \rtfp {\textit{rtfp}}

\section{Introduction}\label{sect:intro}

Locality, in Denning's sense, is the tendency for memory references to cluster within small regions of the address space over time \cite{Denning:CSUR21}. It is a fundamental property of program behavior and a determinant of performance for data-intensive applications.  Locality optimization is important in modern compilers~\citep{AllenK:Book01,Aho+:Book06,Wolfe:Book96,Acharya+:PLDI18}, transformation frameworks~\citep{Yi:SPE12,PatabandiH:CC23}, domain-specific language and compiler for image processing~\citep{Ragan-Kelley+:PLDI13} and tensor operations~\citep{Kjolstad+:OOPSLA17,Vasilache+:TACO20,Li+:SC19}.  A complete characterization should quantify the locality by both machine and program parameters.  



This paper presents a fully symbolic technique to analyze loop locality.  Given symbolic loop bounds, our method derives two types of polynomials: the first computes the cache size, and the second the miss count or the miss ratio.  We call these formulas \emph{algebraic locality} because these are algebraic expressions that capture how locality changes with various parameters.  For example, in a single traversal of an \( n \times n \) matrix, the miss-ratio polynomial is \( \frac{1}{b} \), where $b$ is the cache block size. 



The new analysis is based on a measure called the reuse interval (RI). A reuse interval (RI) is defined as the time between two consecutive accesses to the same memory location.  We solve mainly two problems.  The first is a theory on using RI polynomials to derive cache miss polynomials and its correctness.  The second is compiler analysis of RIs in affine loops.  The formal derivation is iterative and takes time linear to the number of symbolic RI values.  The compiler analysis has two passes and uses integer set programming.  

What is new and necessary to algebraic theory is the introduction of imaginary reuses. Traditionally, the RI of a first-touch access is treated as infinite.  However, an infinite RI poses a problem for symbolic analysis. Imaginary reuse is introduced to assign a finite RI value to first-touch accesses.

Previous techniques are symbolic but not algebraic.  Affine loops can be symbolically analyzed by solving integer-set equations~\cite{Kelly+:LCPC99,BeylsD:JSA05,GMM:TOPLAS99,Bao+:POPL18,Falcon:PLDI24}.  Unlike polynomials, these integer equations are linear and cannot have quadratic or reciprocal terms.  As a result, they cannot be fully symbolic.  For example, they model spatial locality for only constant, not symbolic, cache block size.  In addition, their solution time is exponential in the worst case.  In comparison, the derivation has a linear-time cost in the new theory.

The main contributions of the paper are as follows:

\begin{itemize}
    \item \textbf{Algebraic locality theory}, which uses imaginary reuses and an iterative method to derive cache polynomials in linear time.  The theory proves two formal properties: Working-set Correctness and RI Sum Invariance.
    
    \item \textbf{Affine-loop compiler analysis}, which translates the affine dialect of MLIR into parametric polytopes, uses a two-pass algorithm to count all possible lengths of reuse intervals and their counts as polynomials.
    \item \textbf{Implementation and evaluation}, including 30 scientific kernels~\cite{polybench} and 11 common tensor operations~\cite{blacher2024einsum}, before and after loop fusion optimization, for both analysis speed and accuracy compared against cache simulation and hardware counters.  
\end{itemize}

One use of cache polynomials is scaling analysis.  For example, the $\sqrt{2}$ rule is well known for database systems (scaling buffer pool sizes), file system caches, and Web caching~\citep{Hartstein+:JILP08}.  Cache polynomials show cache scaling for affine loops and have a greater precision than scaling rules.  \Cref{sec:min-max} shows two cache scalings both follow the $\sqrt{2}$ rule, but their miss ratio differ by a factor of two. 



The locality property determines data movement, i.e., the miss count, but does not fully determine the running time, which significantly depends on latency hiding techniques like out-of-order execution and data prefetching.  Latency tolerance, however, does not decrease the amount of data movement.
In addition, we consider sequential programs only and focus on the miss ratio of a single-level cache, where we consider set associativity and the actual hardware cache.  While locality analysis enables program optimizations, this paper focuses on analysis, not optimization, except that we evaluate the analysis with and without optimization, in particular, loop fusion.


\section{The Theory of Algebraic Locality}

This section describes the theory and its new concept called imaginary reuses.  The flow chart in \Cref{fig:img_overview} shows the structure of the subsections: infinite repeat (\Cref{sec:inf-repeat}) produces imaginary reuses which are used in deriving locality polynomials (\Cref{sec:denning,sec:ws-correctness}) and symbolic testing (\Cref{sec:0-check}).  \Cref{sec:LRU_approx} describes LRU approximation. Finally, \Cref{sec:examples}

\begin{figure}[h!]
    \centering
    \resizebox{0.618\linewidth}{!}{
    \begin{tikzpicture}[
        node distance=0.75cm and 2cm,
        flowbox/.style={draw, rounded corners, minimum width=2.5cm, minimum height=0.8cm, align=center, font=\small, fill=blue!10},
        arrow/.style={-stealth, thick}
    ]
        \node[flowbox] (infinite) {Infinite Repeat};
        \node[flowbox, below=of infinite] (imaginary) {Imaginary Reuses};
        \node[flowbox, below left=0.75cm and 1.5cm of imaginary] (denning) {Cache Size and \\Miss Polynomials};
        \node[flowbox, below right=0.75cm and 1.5cm of imaginary] (testing) {Symbolic\\Testing};
        
        \draw[arrow] (infinite) -- (imaginary);
        \draw[arrow] (imaginary) to[out=-150, in=90] node[left, font=\scriptsize, align=left, pos=0.6] {Denning\\Recursion\citep{DenningS:CACM72,Yuan+:TACO19}} (denning);
        \draw[arrow] (imaginary) to[out=-30, in=90] node[right, font=\scriptsize, align=left, pos=0.6] {RI Sum\\Invariance} (testing);
    \end{tikzpicture}
    }
    \caption{Infinite repeat creates imaginary reuses which are used in symbolic locality analysis}
    \label{fig:img_overview}
    \Description{A flowchart with four boxes. The top box is “Infinite Repeat,” pointing downward to “Imaginary Reuses.” From “Imaginary Reuses,” two arrows branch: one to the lower-left box labeled “Cache Size and Miss Polynomials” via Denning Recursion, and one to the lower-right box labeled “Symbolic Testing” via RI sum invariance. The figure illustrates how imaginary reuses created by infinite repetition feed both symbolic cache polynomial derivation and symbolic testing.}
\end{figure}
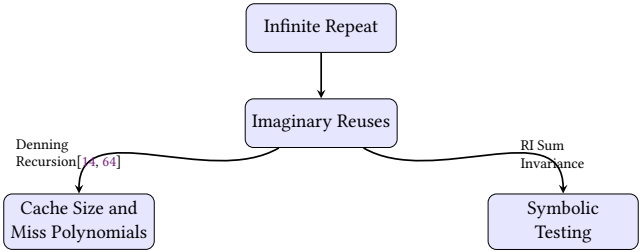

\subsection{Background: Reuse Intervals and Denning Recursion}
\label{sec:denning}

In the view of the memory system, a program execution is a sequence of memory accesses.  Each consecutive pair of accesses to the same data item is a data reuse. The \emph{reuse interval (RI)} is the number of other accesses between the reuse itself. In this paper, we use the logical time, where each access increases the time by one.  The RI value between two accesses is the difference in their logical time.  

Let $P(\ri)$ be a RI distribution, where $P(\ri=x)$ is the portion of RIs whose value is equal to $x$. 
Denning Recursion~\citep{DenningS:CACM72} is a pair of equations to calculate the time-window miss ratio and the average working-set size from the RI distribution as follows:
\begin{align*}
m(x) & = P(\ri > x) \\
s(x+1) & = s(x) + m(x)
\end{align*}
where $m(x)$ is the miss ratio, and $s(x)$ the working-set size. The two equations are defined for all integers $x \ge 0$, with an initial value of $s(0)=0$ and $m(0)=P(\ri>0)=1$.  To compute all results $m,s$, Denning Recursion iterates all RI values $x$ in increasing order.  At each step, the miss ratio is the portion of reuses with RI greater than $x$, and the working-set size increases by the miss ratio.  

The time-window miss ratio $m(x)$ is a function of time.  The working-set size is an average value, which is usually fractional.  As described in \Cref{sec:ws-correctness}, the new theory will use these to approximate the LRU cache size and miss ratio.

\subsection{Infinite Repeat and Imaginary Reuses}
\label{sec:inf-repeat}

In a program, data access may be using an item for the first time, which is a \emph{first-touch access}; otherwise, it is a data reuse.\footnote{The attribute of a reuse can be associated with either the start or the end of the reuse window.  If it is the former, it is a forward use; otherwise, it is a backward reuse~\citet{BeylsD:JSA05}.  In this paper, all RIs are backward RIs.}

A first-touch access is a cold-start miss in any cache~\citep{Hill:Dissertation}. For Denning Recursion, however, first-touch accesses present a fundamental dilemma. If we 
assign infinite reuse intervals to first-touch accesses, then $P(\ri = \infty)>0$ causes the working-set size to diverge to infinity ($s(\infty)=\infty$), making the analysis meaningless. On the other hand, if we exclude first-touch accesses from the analysis entirely, the working-set size remains bounded but we fail to account for cold-start misses. We call this the cold-start miss dilemma.

The algebraic theory resolves the dilemma by introducing \emph{Infinite Repeat}, where a program repeats an infinite number of times.  A first-touch access is still so in the first run, but it is a reuse in the second run and all other repeats.  We call these reuses \emph{Imaginary Reuses}.  The RI of an imaginary reuse is an \emph{Imaginary RI}.  Each data item’s imaginary reuse interval spans from its last access in one run to its next access in the following run.  
If the item is not reused within a run, this interval equals the program length~$n$.  
Otherwise, it is the gap between the last access in the previous run and the first access in the current run, which can be at most~$n - 1$.  
Hence, the largest possible imaginary reuse interval is~$n$.

Figure~\ref{fig:accesses_reuses} illustrates the technique.  Let a simple program be a series of five accesses.  The figure shows the first run and two repeats of the program: the second run and the $r$th run.  The first, second and fourth of the five accesses are first-touch accesses in the first run, colored red.  After the first run, they become imaginary reuses, colored green.  The arrows connect the source and the sink of imaginary reuses between the first two program runs. The largest imaginary RI value is 5 for the access to `B', which is its sole access in a run.

To distinguish, we refer to data reuses before Infinite Repeat as \emph{real reuses}.  Their values and portions are not changed by Infinite Repeat.  
In \Cref{fig:accesses_reuses}, they are third and fifth accesses in each run (colored blue).

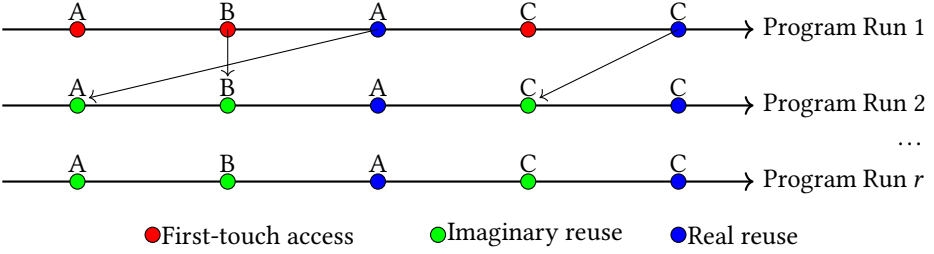
\begin{figure}[htbp]
    \centering
    \begin{tikzpicture}[x=1cm, y=1cm]

        \draw[thick, ->] (0, 3) -- (10, 3) node[right] {Program Run 1};
        \draw[thick, ->] (0, 2) -- (10, 2) node[right] {Program Run 2};
        \draw[thick, ->] (0, 1) -- (10, 1) node[right] {Program Run $r$};

        \draw[fill=red] (1, 3) circle [radius=0.1] node[anchor=south] {A};
        \draw[fill=red] (3, 3) circle [radius=0.1] node[anchor=south] {B};
        \draw[fill=blue] (5, 3) circle [radius=0.1] node[anchor=south] {A};
        \draw[fill=red] (7, 3) circle [radius=0.1] node[anchor=south] {C};
        \draw[fill=blue] (9, 3) circle [radius=0.1] node[anchor=south] {C};

        \draw[fill=green] (1, 2) circle [radius=0.1] node[anchor=south] {A};
        \draw[fill=green] (3, 2) circle [radius=0.1] node[anchor=south] {B};
        \draw[fill=blue] (5, 2) circle [radius=0.1] node[anchor=south] {A};
        \draw[fill=green] (7, 2) circle [radius=0.1] node[anchor=south] {C};
        \draw[fill=blue] (9, 2) circle [radius=0.1] node[anchor=south] {C};

        \node at (12.1,1.5) {\dots};

        \draw[fill=green] (1, 1) circle [radius=0.1] node[anchor=south] {A};
        \draw[fill=green] (3, 1) circle [radius=0.1] node[anchor=south] {B};
        \draw[fill=blue] (5, 1) circle [radius=0.1] node[anchor=south] {A};
        \draw[fill=green] (7, 1) circle [radius=0.1] node[anchor=south] {C};
        \draw[fill=blue] (9, 1) circle [radius=0.1] node[anchor=south] {C};

        \draw[->] (5, 3) -- (1.15, 2.1) node[midway, left] {};
        \draw[->] (3, 3) -- (3, 2.38) node[midway, right] {};
        \draw[->] (9, 3) -- (7.15, 2.1) node[midway, right] {};

        \begin{scope}[shift={(0, 0.3)}]
            \draw[fill=red] (2, 0) circle [radius=0.1] node[right] {First-touch access};
            \draw[fill=green] (5.8, 0) circle [radius=0.1] node[right] {Imaginary reuse};
            \draw[fill=blue] (9, 0) circle [radius=0.1] node[right] {Real reuse};
        \end{scope}

    \end{tikzpicture}
    \caption{An example of Infinite Repeat, with first-touch accesses (red), real reuses (blue), and imaginary reuses (green)}
    \Description{A timeline showing the program runs with first-time accesses and reuses of variables A, B, and C. Red nodes indicate first-time access, green nodes indicate reuse, and blue nodes indicate real reuse.}
    \label{fig:accesses_reuses}
\end{figure}

Imaginary reuses have no direct effect on real reuses.  However, to produce a correct model of cache behavior, Denning Recursion requires that all accesses have a finite RI value.  Imaginary RIs are an “additive.”  Intuitively, their inclusion is necessary as the context of real reuses, enabling the correct computation of whether real reuses result in cache hits or misses. The next two sections first formalize this correctness property and then show how to convert imaginary reuses back to first-time accesses.

\subsection{Working-set Correctness}
\label{sec:ws-correctness}
\label{sec:deriv-rd}

Denning Recursion was derived as a stochastic property, assuming an unending sequence, a stationary process, and a third property that ``guarantees that the time average converges to the stochastic average''~\citep{DenningS:CACM72}.  For deterministic sequences, \citet{Xiang+:ASPLOS13} defined the footprint function, $\fp(x)$, which is the average working-set size of a time window of length $x$.  We use both the working-set theory and the footprint theory to establish the correctness of Denning Recursion when using imaginary reuses.

We prove working-set correctness under Infinite Repeat: $c(x)$ computed in the Denning Recursion is the footprint $\fp(x)$, that is, the average working-set size of a deterministic sequence.  This means that Denning Recursion is correct for any program under Infinite Repeat and does not require any stochastic assumption about such cases.

Given a finite-length sequence, the Xiang formula computes the footprint $\fp(x)$, the average working-set size of all windows of length $x$ \citep{Xiang+:PACT11}.  Let the set $\{r_i\}$ be the values of all RIs and the sets $\{f_d\},\{l_d\}$ the first- and last-access times for all data $d$.  We slightly change the representation of the Xiang formula to refer to all boundary accesses as $\{b_d\}$, which includes both the first and last access.  Each data item $d$ has two $b_d$s.
Let $n$ be the number of accesses (the length of a trace), and $m$ be the number of distinct data (the size of the data).  The Xiang formula computes the footprint as follows

\begin{align*}
    \fp(x) =& m - \frac{\sum_{r_i > x} (ri -x ) - \sum_{b_d > x} (b_d -x)}{n-x+1} 
\end{align*}

\noindent where $x=0, \dots, n$.  
We further define the reuse-term footprint by\footnote{In a sequence of length $n$, the largest possible RI value is $n-1$.  With Infinite Repeat, the largest possible value is $n$, that is, when a symbol appears only once in the sequence (before repeating and becomes an imaginary reuse afterward).  An example is B in \Cref{fig:accesses_reuses}.}

$$\rtfp(x) = m - \sum_{i=x+1}^n (i-x)P(\ri=i)$$

Intuitively, $\rtfp(c)$ counts how many RI values exceed the cache size $c$, weighted by their frequency. Under infinite repeat, this footprint converges to zero because every access eventually becomes a reuse. The following theorem formalizes this property and shows that the working-set size correctly accounts for all distinct data items accessed.

\begin{theorem}[Working-set Correctness]
\label{thm:ws-correctness}
    Under Infinite Repeat, Denning Recursion computes the average working-set size correctly.
\end{theorem}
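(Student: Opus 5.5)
The plan is to show that, under Infinite Repeat, the quantity $s(x)$ produced by Denning Recursion equals the footprint $\fp(x)$ of the repeated trace. First I will rewrite $s(x)$ in closed form. Since $s(0)=0$ and $s(x+1)=s(x)+P(\ri>x)$, telescoping gives
\[
s(x)=\sum_{k=0}^{x-1}P(\ri>k)=\sum_{i\ge 1}\min(i,x)\,P(\ri=i).
\]
Every access has a finite RI bounded by $n$ once imaginary reuses are included (the footnote on the maximal RI). So $P$ is a genuine probability distribution on $\{1,\dots,n\}$, and $s(x)$ is finite.

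Next I will relate $s(x)$ to $\rtfp(x)$. Write $\min(i,x)=i-(i-x)^+$. Then
\[
s(x)=E[\ri]-\sum_{i=x+1}^{n}(i-x)P(\ri=i).
\]
The key identity is \emph{RI sum invariance}. In one run of the repeated program, each access contributes exactly one RI (real or imaginary). For each datum $d$, its RIs in a run partition the cyclic time of length $n$, so they sum to exactly $n$. Summing over all $m$ data gives $\sum \ri = mn$ over the $n$ accesses, hence $E[\ri]=m$. Substituting, we get $s(x)=m-\sum_{i>x}(i-x)P(\ri=i)=\rtfp(x)$.

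It remains to show that $\rtfp(x)$ is the true average working-set size of windows of length $x$ in the infinitely repeated trace. I will apply the Xiang formula to the trace consisting of $r$ repetitions and let $r\to\infty$. In that trace the reuses are the real reuses of all runs plus the imaginary reuses of runs $2,\dots,r$, and the boundary terms $b_d$ come only from the first and last runs. Both the $O(1)$ boundary terms and the missing imaginary reuses of the first run therefore contribute $O(n)$ to the numerator. The denominator is $rn-x+1$, while the reuse sum grows like $r\cdot n\sum_{i>x}(i-x)P(\ri=i)$. Taking the limit yields $\fp_\infty(x)=m-\sum_{i>x}(i-x)P(\ri=i)=\rtfp(x)=s(x)$. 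Equivalently, I could compute the footprint of the periodic trace directly by averaging over the $n$ cyclic window starting positions, which gives exactly $\rtfp(x)$ with no limit needed.

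The main obstacle is making the limit, or the cyclic-window argument, rigorous. Specifically, I need the reuse distribution of the $r$-fold trace to converge to $P$, and the boundary corrections in the Xiang formula to vanish. I would handle this by doing the cyclic computation directly. A datum $d$ is absent from a cyclic window of length $x$ exactly when the window lies inside one of its reuse gaps; a gap of length $i$ contains $(i-x)^+$ such windows. Summing over gaps and dividing by $n$ gives the average number of absent data as $\sum_{i>x}(i-x)P(\ri=i)$. This proves the statement without needing stationarity assumptions.
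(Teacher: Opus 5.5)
Your proof is correct and has the same two-step skeleton as the paper's. First you identify $s(x)$ with $\rtfp(x)$; then you identify $\rtfp(x)$ with $\lim_{r\to\infty}\fp(x)$, the Xiang footprint of the $r$-fold trace in the limit. The difference lies in how each step is closed.

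For the first step, the paper matches increments, $\rtfp(x+1)-\rtfp(x)=P(\ri>x)=s(x+1)-s(x)$, and takes the base case $s(0)=\fp(0)=0$ with a pointer to the Zero Footprint lemma. But the paper derives that lemma \emph{from} this theorem. So the base case $\rtfp(0)=0$, equivalently $\sum_i i\,P(\ri=i)=m$, is available there only implicitly, by evaluating the Xiang limit at $x=0$. Your telescoped closed form is the same computation. However, you prove $E[\ri]=m$ independently, by noting that each datum's real and imaginary RIs tile the period of length $n$. This removes the apparent circularity. It also makes RI Sum Invariance, and hence the symbolic RI test, a standalone combinatorial fact rather than a corollary of the theorem.

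For the second step, your cyclic-window count is the sharper version. An RI of length $i$ leaves its datum absent from exactly $(i-x)^+$ of the $n$ cyclic windows. This gives exact equality for the periodic trace with no limit, and it makes precise the paper's terse claim that each RI contributes \emph{exactly} $\frac{1}{n}$. To transfer the result back to $\lim_{r\to\infty}\fp(x)$, you need only one observation: every window of the $r$-fold trace has the same contents as some cyclic window, and all $n$ cyclic starting positions occur with frequencies differing by at most one.

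The trade-off is this. The paper's route is shorter because it inherits correctness from the established Xiang formula. Yours is self-contained and elementary, and it shows directly why imaginary reuses are exactly what closes each datum's gaps.
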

\begin{proof}
    Let $n$ be the length of a program and $r$ be the number of repeats.  We first prove \emph{$\rtfp(x)$ correctness}.  
    As $r$ becomes infinite, 
    the boundary terms in the Xiang formula have a finite value and can be removed because the denominator, $rn-x+1$, becomes infinite.  Furthermore, each RI has a finite value at most $n$, its contribution is exactly $\frac{1}{n}$.
    Hence, the reuse-term footprint is correct because it converges to the Xiang footprint.  
    Furthermore, Denning Recursion is equivalent to the reuse-term footprint, $s(x) = \fp(x)$.  We prove it inductively.  In the base case, $s(0)=\fp(0)=0$ (see also the following lemma).  Assuming $s(x)=\fp(x)$, it is straightforward to verify $\fp(x+1)-\fp(x)=s(x+1)-s(x)=P(\ri>x)$.  Combining it with $\rtfp(x)$ correctness, we have 
    $$s(x) = \rtfp(x) = \lim_{r \rightarrow \infty} \fp(x) $$
\end{proof}

Under Infinite Repeat, the average working-set size is the same whether it is computed using Denning Recursion or using the footprint.  In other words, imaginary reuses guaranty the correctness of Denning Recursion when used on a finite-length program.  



Omitted for lack of space, we can also prove that when the working-set (cache) size equals or exceeds the data size, the miss ratio drops to $0$.
A reader may question that in real executions, cold-start misses are inherent and not zero. Next, we consider the LRU cache and cold-start misses.

\subsection{RI Sum Invariance and Symbolic Testing}
\label{sec:0-check}

From Working-set Correctness (\Cref{thm:ws-correctness}), we can derive the following lemma by taking the timescale parameter at $x=0$.  When $x=0$, a zero-length window contains no data access, the working-set size or footprint is therefore zero.  We call it the Zero Footprint lemma.

\begin{lemma}[Zero Footprint]
\label{lma:zero-fp}
    Under Infinite Repeat, $\rtfp(0) = m - \sum_{\ri=0}^n \ri\ P(\ri) = 0$, where $m$ is the data size, $P(\ri)$ are the portions of all $n$ distinct RI values $\ri$.  
\end{lemma}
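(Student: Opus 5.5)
We take the definition of the reuse-term footprint at $x=0$ and combine it with \Cref{thm:ws-correctness}. Setting $x=0$ in
$$\rtfp(x) = m - \sum_{i=x+1}^n (i-x)P(\ri=i)$$
gives $\rtfp(0) = m - \sum_{i=1}^n i\,P(\ri=i)$. Adding the $i=0$ term, which contributes $0\cdot P(\ri=0)=0$, yields exactly the expression $m - \sum_{\ri=0}^n \ri\,P(\ri)$ in the statement. The algebraic form of the lemma is therefore immediate, and it remains to show that this quantity vanishes.

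For that we invoke Working-set Correctness. \Cref{thm:ws-correctness} gives $s(x)=\rtfp(x)=\lim_{r\to\infty}\fp(x)$ for all $x\ge 0$. In particular $\rtfp(0)=s(0)$, and Denning Recursion fixes $s(0)=0$ as its initial value. Independently, a window of length zero contains no access, so its working set is empty and the footprint is zero. We therefore conclude $\rtfp(0)=0$.

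The main obstacle is that the proof of \Cref{thm:ws-correctness} itself uses $s(0)=\fp(0)=0$ as its base case and points to this lemma. To avoid circularity, we check the identity directly from Infinite Repeat. Consider $r$ repeats of the program. Each datum $d$ accessed $k_d$ times per run has its accesses at cyclic positions $t_1<\dots<t_{k_d}$. Its $k_d$ reuse intervals in the steady state (real reuses plus the one imaginary reuse) are the cyclic gaps, and these sum to exactly $n$ per run. Summing over the $m$ data, the total RI per run is $mn$. Dividing by the $n$ accesses per run, the average RI is $\sum_{\ri}\ri\,P(\ri) = m$. The boundary effects of the very first run (first touches without a prior access) are $O(mn)$ out of a total of order $rmn$, so they vanish as $r\to\infty$. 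This counting argument gives $\rtfp(0)=0$ directly, consistent with the limit taken in \Cref{thm:ws-correctness}.
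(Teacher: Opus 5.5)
Your proposal is correct, but the argument that actually carries it differs from the paper's. The paper only specializes \Cref{thm:ws-correctness} at $x=0$: $\rtfp(x)=\lim_{r\to\infty}\fp(x)$, and a zero-length window contains no access, so $\rtfp(0)=0$. Your second paragraph already contains this. The circularity you flag affects only the $s$-half of the theorem, whose induction base case does rest on this lemma. The other half does not: the Xiang footprint of the $r$-fold trace tends to $\rtfp(x)$ because the boundary terms stay bounded while the denominator $rn-x+1$ grows. Together with $\fp(0)=0$, that half gives the lemma without circularity.

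Your cyclic-gap count instead proves RI Sum Invariance directly. Each datum's steady-state RIs (its real reuses plus its one imaginary reuse) tile one period, so they sum to $n$. Over all data the total is $mn$, and dividing by the $n$ accesses per run gives $\sum_{\ri}\ri\,P(\ri)=m$. This is self-contained, avoids the Xiang formula and any limit of footprints, and makes clear why the symbolic RI test must pass. It is essentially the periodic form of the per-datum telescoping that makes the Xiang footprint vanish at $x=0$ (first-access time, gaps, and reverse last-access time sum to the trace length plus one). So both routes rest on the same fact; the paper reaches it through the footprint limit, and yours goes straight to it.

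One phrase to tighten: first-run first touches have infinite RI under \Cref{dfn:ri-def}, so they are not an $O(mn)$ contribution to any sum. Say instead that they form a fraction $m/(rn)\to 0$ of the accesses, while every other access has exactly its steady-state RI. Then the limiting portions $P(\ri)$ are exactly the per-run portions, and the identity is exact rather than asymptotic.
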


Re-arranging the terms, we have $\sum_{\ri=0}^n \ri\ P(\ri) = m$.  If we view the $n$ RI values and $n$ RI portions as vectors, the left-hand side is their dot product.  We call the equation
\emph{RI Sum Invariance}.  The invariance states that \emph{the dot product of the value vector and the portion vector must equal the data size,} or $\mathbf{v} \cdot \mathbf{p} = m$.  

The invariance provides a \emph{Symbolic RI Test} for checking for errors in symbolic RI values or portions. 
From the dot-product rule: {\bf all product terms must cancel each other}.  For any program under Infinite Repeat, this check must pass; otherwise, there must be an error in one of the RI values or one of their portions. 

The symbolic RI test is not a \emph{proof} of RI correctness.  It checks that a set of symbolic terms must all cancel each other.  It rules out single-term errors, but multiple errors can cancel each other out and still pass the test.  Statistically, the test provides greater assurance of correctness as the number of symbolic RI values increases.  The more complex a program is, the more RI values it has, and the more likely an error may occur.  Though imperfect, this test offers a practical safeguard, whether the analysis is done manually or automated by a compiler.

\subsection{LRU Approximation}
\label{sec:LRU_approx}

Our work approximates the LRU cache size with the \emph{average working-set size $c(x)$}. 

\theoremstyle{definition}
\newtheorem{assump}{Assumption}

\begin{assump}[LRU Approximation]  The miss ratio of the LRU cache can be approximated by the miss ratio of the working-set cache by:
    $$\mr(c(x)) \approx m(s(x))$$
where $c(x)=s(x)$.
\end{assump}

Since $s(x)$ is fractional, the approximated LRU cache size $c(x)$ may also be fractional. To interpret fractional cache sizes, note that $s(x)$ is monotone and increases by at most one at each step ($s(x+1)-s(x) \le 1$ for $x>0$). Therefore, every integer cache size $k$ is either exactly $s(x)$ for some $x$, or it lies between two consecutive values $s(x)$ and $s(x+1)$ that differ by less than one. By monotonicity, the miss ratio for integer cache size $k$ can be bounded by the miss ratios at the surrounding fractional cache sizes. Hence, the approximation effectively computes miss ratios for \emph{consecutive cache sizes}, allowing interpolation to any integer value.


Many past techniques use the LRU approximation, including time-to-RD conversion in program analysis~\citep{Shen+:POPL07}, Featherlight RD using hardware counters~\citep{Wang+:HPCA19}, StatStack~\citep{EklovH:ISPASS10}, Average Eviction Time (AET) for storage caches~\citep{Pan+:TOS21} (as reviewed in \Cref{sec:rel}).  They showed generally high accuracy in large-scale traces.  Unlike these studies, we use the approximation for loop kernels.

\paragraph{Moving-cliff Error} 
Loop kernels were studied by \citet{Chen+:ISMM21}.  They found mainly one type of error and its cause.  If a kernel has \emph{phases}, i.e., multiple loop nests, the prediction is often inaccurate.  The error manifests itself as a ``moving cliff'': when comparing the prediction with the LRU simulation, the predicted drop in miss ratio occurs earlier.  The miss ratio, both before and after the drop, is predicted correctly, but the predicted cache size is an underestimate. 



\paragraph{Cold-start Misses}
To recover the actual miss ratio, we ``undo'' the effect of Infinite Repeat. Denning Recursion produces both cache size and miss ratio. The computed cache size is correct because it accounts for all distinct data items, whether accessed via real reuses or imaginary reuses. The miss ratio, however, treats first-time accesses as imaginary reuses (hits from the prior iteration). In reality, first-time accesses in a single program execution should be cold misses~\citep{Hill:Dissertation}. Therefore, we adjust the miss ratio by converting imaginary reuse hits back to misses. The hit portion attributed to imaginary reuses represents the cold-start miss ratio of the original single-execution program.


\subsection{Example Derivations: Naive Matrix Multiplication}
\label{sec:examples}

In this section, we use naive matrix multiplication as a concrete example to illustrate how symbolic RI distributions are derived and transformed via Denning Recursion into cache-miss polynomials.

Matrix multiplication is a fundamental operation in scientific computing and serves as an accessible example for locality analysis. We consider the standard triply nested loop for multiplying two $n \times n$ matrices $A$, $B$, and $C$. To simplify the analysis, we assume that each cache block stores 8 data elements (block size $b = 8$).

\begin{lstlisting}[basicstyle=\footnotesize\ttfamily]
for (int i = 0; i < n; i++) 
  for (int j = 0; j < n; j++)
    for (int k = 0; k < n; k++)
      C[i][j] += A[i][k] * B[k][j];
\end{lstlisting}

Each innermost loop iteration is modeled as four memory accesses in the order C, A, B, C: the value $C[i][j]$ is first read, then $A[i][k]$ and $B[k][j]$ are accessed, and eventually $C[i][j]$ is written back. This matches the typical memory behavior of a naive implementation, resulting in a total of $4n^3$ accesses. 
We applied the imaginary RI value under infinite repeat to ensure that every access is associated with an RI, as described in \Cref{sec:inf-repeat}, including those corresponding to cold misses or the first access to a data block. After constructing the complete algebraic miss ratio table using all symbolic RI values (including the imaginary RI), we remove the portion corresponding to the imaginary RI when reporting the final miss ratio. This adjustment ensures that the reported miss ratio reflects all the cache misses (including cold misses). 

\begin{table}[!htbp]
    \renewcommand*{\arraystretch}{1.2}
    \caption{Analysis of Naive Matrix Multiplication  (* containing imaginary RIs)}
    \label{tbl:ana-matmul}
    \centering
    \resizebox{0.8\linewidth}{!}{
    \begin{tabular}{c|c|c|c|c|c}
           & \textbf{RI}             & \textbf{P (ri)}                & \textbf{m (ri)}                & \textbf{Cold Miss Ratio} & \textbf{c (ri)}                                               \\ \hline
        0  & $-$                     & $-$                            & $1$                            & $-$                      & $0$                                                           \\
        1  & $1$                     & $\frac{1}{4}$                  & $\frac{3}{4}$                  & $-$                      & $1$                                                           \\
        2  & $3$                     & $\frac{1}{4} - \frac{1}{32n}$  & $\frac{1}{2} + \frac{1}{32n}$  & $-$                      & $3$                                                 \\
        3  & $4$                     & $\frac{7}{32}$                 & $\frac{9}{32} + \frac{1}{32n}$ & $-$                      & $5$                                           \\
        4  & $4n - 28$               & $\frac{1}{32} - \frac{1}{32n}$ & $\frac{1}{4} + \frac{1}{16n}$  & $-$                      & $\frac{9n}{8} - \frac{47}{8} - \frac{31}{32n}$                \\
        5  & $4n$                    & $\frac{7}{32}$                 & $\frac{1}{32} + \frac{1}{16n}$ & $-$                      & $\frac{9n}{8} + \frac{9}{8} + \frac{25}{32n}$                 \\
        6*  & $4n^2 - 28n$            & $\frac{1}{32}$                 & $\frac{3}{32n}$               & $\frac{1}{32n}$          & $\frac{n^2}{8} + \frac{3n}{8} - \frac{7}{8} + \frac{25}{32n}$ \\
        7* & $4n^3 - 4n^2 + 4n - 28$ & $\frac{3}{32n}$                & $\frac{3}{32n}$                & $\frac{2}{32n}$          & $\frac{3n^2}{8} - \frac{n}{8} + \frac{9}{8} - \frac{31}{32n}$ \\
        8* & $4n^3 - 32n + 3$        & $\frac{1}{32n}$                & $\frac{3}{32n}$                & $\frac{3}{32n}$          & $\frac{3n^2}{8}$                                              \\
    \end{tabular}
    }
\end{table}

The RI distribution and the derived cache-miss polynomials are shown in Table~\ref{tbl:ana-matmul}. The table marks the row index of imaginary reuses with a $*$.  The sum of their RI portions is the cold-start miss ratio.


\paragraph{The RI Sum Invariance}
The second and third columns of \Cref{tbl:ana-matmul} list eight RI values and their corresponding portions. By the RI Sum Invariance, their dot product must equal the total data size, or formally, $\mathbf{v} \cdot \mathbf{p} = m$, where both the RI values ($\mathbf{v}$) and portions ($\mathbf{p}$) are polynomials in $n$.   

Computing this dot product yields a polynomial with almost 20 terms, most of which are non-constant. Upon summation, all constant terms sum to zero, all linear terms cancel, and all reciprocal terms likewise vanish. The only surviving terms are quadratic, and they collectively sum to $\frac{3n^2}{8}$ --- exactly the total number of data blocks across the three $n \times n$  matrices (8 elements per block). Thus, the RI Sum Invariance holds, and the test passes. 


\section{Compiler Design and Implementation}
\label{sec:compiler}

Our framework leverages MLIR \cite{MLIR} as a flexible intermediate representation bridge. By using the MLIR Affine dialect, our Symbolic Locality Compiler can be integrated into existing workflows—such as the Polygeist pipeline \cite{polygeistPACT}—without requiring changes to the frontend or backend. This design allows us to improve locality analysis and optimization across various programming languages and tools that utilize the MLIR infrastructure.

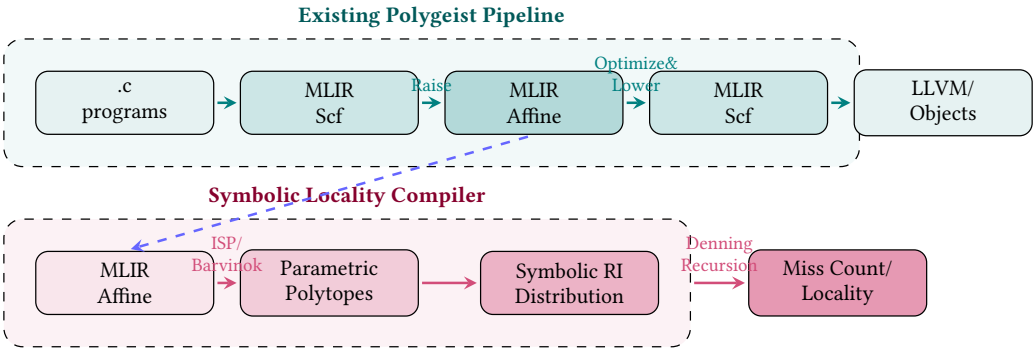
\begin{figure}[htbp]
    \centering
    \resizebox{\linewidth}{!}{
    \begin{tikzpicture}[
        node distance=0.3cm and 0.6cm,
        box/.style={draw, rounded corners, minimum width=2cm, minimum height=0.7cm, align=center, font=\scriptsize, fill=white},
        group/.style={draw, dashed, rounded corners=6pt, inner sep=10pt},
        arrow/.style={-stealth, thick, shorten >=1pt, shorten <=1pt},
        connection/.style={-stealth, thick, dashed, color=blue!60, shorten >=2pt, shorten <=2pt}
    ]
    
    \begin{scope}[local bounding box=polygeist]
        \node[box, fill=teal!10] (c) at (-6.5, 0) {.c\\programs};
        \node[box, fill=teal!20] (scf1) at (-4.2, 0) {MLIR\\Scf};
        \node[box, fill=teal!30] (affine1) at (-1.9, 0) {MLIR\\Affine};
        \node[box, fill=teal!20] (scf2) at (0.4, 0) {MLIR\\Scf};
        \node[box, fill=teal!10] (llvm) at (2.7, 0) {LLVM/\\Objects};
        
        \draw[arrow, color=teal!100] (c) -- (scf1);
        \draw[arrow, color=teal!100] (scf1) -- node[above, font=\tiny, pos=0.5] {Raise} (affine1);
        \draw[arrow, color=teal!100] (affine1) -- node[above, font=\tiny, align=center, pos=0.5] {Optimize\&\\Lower} (scf2);
        \draw[arrow, color=teal!100] (scf2) -- (llvm);
    \end{scope}
    
    \begin{scope}[on background layer]
        \node[group, fill=teal!5, fit={(c) (scf2)}, label={[font=\scriptsize\bfseries, color=teal!70!black]above:Existing Polygeist Pipeline}] (polygeist-box) {};
    \end{scope}
    
    
    \begin{scope}[local bounding box=newpipe, shift={(0, -2)}]
        \node[box, fill=purple!10] (affine2) at (-6.5, 0) {MLIR\\Affine};
        \node[box, fill=purple!20] (polytopes) at (-4.2, 0) {Parametric\\Polytopes};
        \node[box, fill=purple!30] (ridist) at (-1.5, 0) {Symbolic RI\\Distribution};
        
        \draw[arrow, color=purple!70] (affine2) -- node[above, font=\tiny, align=center, pos=0.5] {ISP/\\Barvinok} (polytopes);
        \draw[arrow, color=purple!70] (polytopes) -- (ridist);
    \end{scope}
    
    \begin{scope}[on background layer]
        \node[group, fill=purple!5, fit={(affine2) (ridist)}, label={[font=\scriptsize\bfseries, color=purple!70!black]above:Symbolic Locality Compiler}] (newpipe-box) {};
    \end{scope}
    
    \node[box, fill=purple!40, right=1cm of ridist] (metrics) {Miss Count/\\Locality};
    \draw[arrow, color=purple!70] (newpipe-box.east) -- node[above, font=\tiny, align=center, pos=0.5] {Denning\\Recursion} (metrics);
    
    \draw[connection] (affine1.south) -- node[right, font=\tiny, align=center, xshift=2pt, pos=0.5] {} (affine2.north);
    
    \end{tikzpicture}
    }
    \caption{Workflow diagram showing the existing Polygeist pipeline and our Symbolic Locality Compiler, interconnected via MLIR Affine representation.}
    \Description{A workflow diagram with two isolated groups in horizontal layout: the existing Polygeist pipeline showing .c programs through MLIR Scf, raise to MLIR Affine, optimize and lower back to MLIR Scf, then to LLVM/Objects; and the Symbolic Locality Compiler showing MLIR Affine transforming through Parametric Polytopes to Symbolic RI Distribution and outputting Miss Count/Locality Metrics via Denning Recursion.}
    \label{fig:workflow-diagram}
\end{figure}

\subsection{Affine Program Definition}

We define the following program structure as our analysis target, simplified from MLIR's \texttt{affine} dialect. Our framework does not permit symbols as coefficients within affine expressions. \Cref{tbl:affine-syntax} presents the syntax in two parts: the left column defines core program constructs (loops, conditionals, blocks, and memory accesses), while the right column defines auxiliary constructs for affine expressions and constraint sets. Here $\mathcal S$ is the symbol set, \textit{array} identifies disjoint memory objects, \textit{constant} denotes integer literals, and \textit{ivar} $= i_{level}$ denotes loop index variables at a given nesting level.

\begin{table}[htbp]
\small
\centering
\caption{Affine program syntax}
\setlength{\tabcolsep}{8pt}
\begin{tabular}{@{}ll@{}}
\toprule
\textbf{Core Syntax} & \textbf{Auxiliary Definitions} \\
\midrule
$\begin{array}{r@{~}c@{~}l}
    \textit{prog} &=& (\mathcal S, \textit{stmt}) \\
    \textit{stmt} &=& \textit{loop} \mid \textit{block} \mid \textit{if} \mid \textit{access} \\
    \textit{loop} &=& \textbf{for} ~ \textit{expr} ~ \textbf{to} ~ \textit{expr} ~ \textbf{step} ~ \textit{constant} \\
    && \textbf{do} ~ \textit{stmt} ~ \textbf{endfor} \\
    \textit{block} &=& \textit{stmt} ~ (; \textit{stmt})+ \\
    \textit{if} &=& \textbf{if} ~ \textit{set} ~ \textbf{then} ~ \textit{stmt} \\
    && \textbf{else}  ~ \textit{stmt} ~ \textbf{endif} \\
    \textit{access} &=& \textbf{access} ~ \textit{array} [\textit{expr} (, \textit{expr})*]
\end{array}$ &
$\begin{array}{r@{~}c@{~}l}
    \textit{atom} &=& \textit{constant} \mid \textit{ivar} \mid s \in \mathcal{S} \\
    \textit{expr} &=& \textit{atom}  \mid \textit{expr} \pm \textit{expr} \\
    && \mid \textit{constant} \times \textit{expr} \\
    && \mid \left\lceil \frac {\textit{expr}}{ \textit{constant}} \right\rceil \mid \left\lfloor \frac {\textit{expr}}{ \textit{constant}} \right\rfloor \\[1ex]
    \textit{constraint} &=& \textit{expr} ~ \mathbf{= 0} \mid \textit{expr} ~ \mathbf{< 0} \\
    \textit{conj} &=& \textit{constraint} ~ (\wedge ~ \textit{constraint})^* \\
    \textit{set} &=&  \textit{conj} ~ (\vee ~ \textit{conj})^*
\end{array}$ \\
\bottomrule
\end{tabular}
\Description{A two-column table showing affine program syntax: the left column defines core constructs (prog, stmt, loop, block, if, access) and the right column defines auxiliary constructs (atom, expr, constraint, conj, set) for affine expressions and constraint sets.}
\label{tbl:affine-syntax}
\end{table}

\subsection{Affine Representation of Timestamp and Memory Access}\label{sect:timestamp-space}

To analyze reuse intervals, we need to obtain the set of access timestamps. This differs from the iteration space, as timestamps are associated with each memory access and do not necessarily describe the values of loop index variables. We demonstrate a translation that converts an affine program into an integer vector space, summarizing the formal semantics of the algorithm in \Cref{fig:timestamp-construction} and providing a concrete example in \Cref{fig:algo-example}.

We define an operation $\llbracket \cdot \rrbracket(\mathcal M; \ell)$ to convert the program representation $p \in \mathcal P$ into the desired space $\mathcal T \subseteq \mathbb Z^n$ for some $n$, under context $\mathcal{M}$ and $\ell$. The following concepts are used in the translation: 

\begin{itemize}
    \item $\mathcal M$ maps index variables (by loop level) to vector components, where $|\mathcal{M}|$ equals the number of in-scope index variables. It normalizes strided loop indices into continuous iteration space.
    \item $\ell$ counts the timestamp vector dimensions used up to the current construct, incremented when entering loops or statement blocks to distinguish memory accesses.
    \item $\mathcal L(s)$ denotes the vector dimension of space $s$.
    \item $x |_{\mathcal M}$ denotes substituting index variables in expression $x$ (or constraint expressions when $x$ is a set) with $\mathcal M$.
    \item $a \cup^\uparrow b$ and $a \cap^\uparrow b$ denote standard set union and intersection when $\mathcal L(a) = \mathcal L(b)$; otherwise, the lower-dimensional space is lifted by adding unconstrained trailing dimensions. We may use corresponding $\sqcup, \sqcap$ variants to illustrate operations on disjoint subsets.
\end{itemize}

\Cref{fig:timestamp-construction} defines the translation algorithm inductively. The base cases involve memory accesses, where the algorithm returns an unconstrained set. In other cases, the algorithm applies recursively onto substatements, aligning space dimensions and adding constraints imposed at parent to refine the space.

\begin{figure}[htbp]
\small
\setlength{\abovedisplayskip}{4pt}
\setlength{\belowdisplayskip}{4pt}
\setlength{\jot}{3pt}
\begin{align*}
    \llbracket \cdot \rrbracket(\mathcal M; \ell) &: \mathcal P \to \bigcup_{n \in \mathbb N}\textbf{Pow}(\mathbb Z^n) \\
    \llbracket (\mathcal S, \textit{s}) \rrbracket(\mathcal M; \ell) &= \llbracket \textit{s} \rrbracket(\mathcal M; \ell) \\
    \llbracket  \textbf{access} ~ A ~ [e_0, ..., e_n] \rrbracket(\mathcal M; \ell) &= \mathbb{Z}^\ell \\[2pt]
    \llbracket  \textbf{for} ~ \textit{l} ~ \textbf{to} ~ \textit{u} ~ \textbf{step} ~ \textit{c} ~ \textbf{do} ~ \textit{b} ~ \textbf{endfor} \rrbracket(\mathcal M; \ell) = 
    \textbf{let} ~ \mathcal{M}' ~  &= ~ \mathcal M \cup \{ |\mathcal M|  \mapsto (l \mid_{\mathcal M} + c \times v[\ell]) \} ;
    s~ = \llbracket \textit{b} \rrbracket(\mathcal M'; \ell + 1) \\[-6pt]
    &~\textbf{in} ~ \left. s \cap^\uparrow \middle\{v \in \mathbb{Z}^{\mathcal L(s)} : \left(v[\ell] \ge 0\right) \wedge \middle( v[\ell] < \middle\lceil \frac{u \mid_{\mathcal M} - l \mid_{\mathcal M}}{c}\middle\rceil\middle) \middle\}\right. \\[2pt]
    \llbracket  s_0; s_1; ...; s_n\rrbracket(\mathcal M; \ell) =
    \textbf{let} ~\displaystyle\mathop{\forall}_{i = 0}^n s'_i &= \llbracket s_i\rrbracket(\mathcal M; \ell + 1) ~ ;
    \mathcal{L}^\ast = \max_{i = 0...n} \mathcal{L}(s_i) \\[-6pt]
    &~\textbf{in}~ {\bigsqcup_{i = 0}^n}^\uparrow \left( s_i' \cap^\uparrow \middle\{ v \in \mathbb{Z}^{\mathcal L^\ast} : (v[\ell] = i) \wedge \middle( \bigwedge_{j = \mathcal{L}(s'_i)}^{\mathcal L^\ast - 1} v[j] = 0 \middle) \middle\}\right) \\[2pt]
    \llbracket  \textbf{if} ~ c ~ \textbf{then} ~ t ~ \textbf{else}  ~ e ~ \textbf{endif} \rrbracket(\mathcal M; \ell) =
    \textbf{let} ~ t' ~ &= \llbracket t \rrbracket(\mathcal M; \ell) ~ ; ~ e' ~ = \llbracket e  \rrbracket(\mathcal M; \ell) ~;~\mathcal{L}^\ast = \max \{ \mathcal{L}(t'), \mathcal{L}(e') \} \\[-6pt]
    &~\textbf{in}~ \left( t' \cap^\uparrow \middle\{ v \in \mathbb{Z}^{\mathcal L^\ast} : c|_\mathcal{M} \wedge \middle( \bigwedge_{j = \mathcal{L}(t')}^{\mathcal L^\ast - 1} v[j] = 0 \middle) \middle\}\right) \\[-6pt]
    &~\sqcup \left( e' \cap^\uparrow \middle\{ v \in \mathbb{Z}^{\mathcal L^\ast} : \neg c|_\mathcal{M} \wedge \middle( \bigwedge_{j = \mathcal{L}(e')}^{\mathcal L^\ast - 1} v[j] = 0 \middle) \middle\}\right)
\end{align*}
\caption{Formal Semantics of Timestamp Space Construction}
\Description{Mathematical definition of timestamp space construction semantics, showing formal notation for loops, blocks, conditionals, and memory access statements with their corresponding timestamp space representations.}
\label{fig:timestamp-construction}
\end{figure}

The timestamp space is ordered lexicographically. An order-preserving bijection can be formed between the memory access trace of the original program $p$ and the timestamp space $\mathcal T = \llbracket p \rrbracket(\emptyset, 0)$. The size of the timestamp space matches the length of the program's access trace.

\begin{example}
\label{exp:local-example}

Consider the program shown on the left in \Cref{fig:algo-example}. We use $s_i$ to denote the single statement at line $i$ and $b_i$ to denote the entire block starting at line $i$. The table on the right summarizes the local input context and output for each $s_i$ and $b_i$. Notably, $s_2$, $s_5$, $s_8$, and $s_9$ are leaf statements. The top-level result $s_1$ is derived step by step from these inner results. The second mapping $1 \mapsto d_0 + 4d_2 - 512$ in the context $\mathcal{M}$ at $s_5$ is not used here but will become important when constructing the access map in later sections.

\begin{figure}[htbp]
\small
\centering
\begin{minipage}[t]{0.48\textwidth}
\vspace{0pt}
\begin{algorithmic}[1]
\For{$0$ \textbf{to} $2026$ \textbf{step} $1$}
    \State \textbf{access} $A[i_0]$;
    \If{$i_0 > 1024$}
        \For{$i_0$ - 512 \textbf{to} $i_0 + 512$ \textbf{step} $4$}
            \State \textbf{access} $B[i_0, i_1]$
        \EndFor
    \Else
        \State \textbf{access} $A[i_0 + 512]$;
        \State \textbf{access} $C[2 i_0 + 1]$
    \EndIf
\EndFor
\end{algorithmic}
\label{algo:example}
\vspace*{\fill}
\end{minipage}%
\hfill
\begin{minipage}[t]{0.48\textwidth}
\vspace{0pt}
\label{tbl:tspace-derivation}
\scriptsize
\setlength{\tabcolsep}{3pt}
\setlength{\arrayrulewidth}{0.4pt}
\begin{tabular}{c|c|l}
\toprule
Stmt & $\ell$ & Result \\
\hline
$s_1$ & 0 & $d_0 \in [0, 2026), d_1 = 0, d_2 = 0$ \\
 & & $\vee~ d_0 \in [0, 1024), d_1 = 1, d_2 \in [0, 256)$ \\
 & & $\vee~ d_0 \in [1024, 2026), d_1 = 1, d_2 \in [0, 2)$ \\
\hline
$b_2$ & 1 & $d_1 = 0, d_2 = 0$ \\
 & & $\vee~ d_0 < 1024, d_1 = 1, d_2 \in [0, 256)$ \\
 & & $\vee~ d_0 \ge 1024, d_1 = 1, d_2 \in [0, 2)$ \\
\hline
$s_2$ & 2 & $\mathbb{Z}^2$ \\
\hline
$s_3$ & 2 & $d_0 < 1024, d_2 \in [0, 256)$ \\
 & & $\vee~ d_0 \ge 1024, d_2 \in [0, 2)$ \\
\hline
$s_4$ & 2 & $d_2 \in [0, 256)$ \\
\hline
$s_5$ & 3 & $\mathbb{Z}^3$ \\
\hline
$b_8$ & 2 & $d_2 \in [0, 2)$ \\
\hline
$s_8$ & 3 & $\mathbb{Z}^3$ \\
\hline
$s_9$ & 3 & $\mathbb{Z}^3$ \\
\bottomrule
\end{tabular}
\vspace*{\fill}
\end{minipage}
\caption{Example Timestamp Space Derivation (Symbols are explained in Example~\ref{exp:local-example}).}
\label{fig:algo-example}
\Description{Side-by-side layout showing an example affine program with three arrays A, B, C on the left, and the corresponding timestamp space derivation results for each statement on the right using compact interval notation.}
\end{figure}

\end{example}

After constructing the timestamp space, we attach an \emph{affine access map} that connects each timestamp to the data item it accesses.  
Let $\mathcal{R}^\ast_p$ be the largest rank among all arrays in program $p$.  
The access map $\mathcal{A}_p : \mathcal{T} \to \mathbb{Z}^{\mathcal{R}^\ast_p + 1}$
associates each timestamp $t \in \mathcal{T}$ with a data vector $m \in \mathbb{Z}^{\mathcal{R}^\ast_p + 1}$, where the first component encodes the array identifier and the following $r$ components represent its indices.  

An additional affine transformation $\mathcal{F}$ can be applied to these indices to model layout transformations or block-level reuse.  
For example, if array elements are grouped into blocks of size $\mathcal{B}$ along the last dimension, the following transform adjusts the access granularity:
\[
\mathcal{F}_{\mathcal{B}}([x_0, \ldots, x_{r - 1}, x_r]) =
[x_0, \ldots, x_{r-1}, \lfloor x_r / \mathcal{B} \rfloor].
\]

During translation, the access map shares most operations with the timestamp space, except that in access statements its range is explicitly defined:
\[
\llbracket \textbf{access}~A[e_0, \ldots, e_n] \rrbracket(\mathcal{M}; \ell; \mathcal{F})
= \{ v \mapsto \mathcal{F}([A, 0, \ldots, 0, e_0|_{\mathcal{M}}, \ldots, e_n|_{\mathcal{M}}]) : v \in \mathbb{Z}^\ell \}.
\]
This construction effectively augments the timestamp space with affine mappings that capture data access behavior.

\subsection{Counting RI Distribution as Piecewise Quasi-polynomials} \label{subsect:counting-algo}
We formulate the reuse interval distribution as a counting problem solved via Barvinok decomposition and reparametrization. We first introduce Integer Set Programming and Barvinok counting, then derive the symbolic RI representation and compute the complete distribution.

\subsubsection{Background: Integer Set Programming and Barvinok Counting}

Integer Set Programming provides the algebraic framework for manipulating affine sets and relations. We use the Integer Set Library (ISL) \cite{Verdoolaege2010isl} for affine structures and the Barvinok library \cite{barvinok} for cardinality.

ISL maps represent multi-valued functions between domain and codomain via affine transformations. When a map $R$ has identical domain and codomain $X$, it defines a binary relation $\mathcal{R}$ on $X$ where $(x, x') \in \mathcal{R} \iff x' \in R(x)$.
We use the following operations (all closed under affine integer sets and maps):
$A^{-1}$ (inverse),
$A \succ B$ (lexicographic order linking $a \in A$ to all lesser $b \in B$),
$A \succeq B, A \prec B, A \preceq B$ (similar orderings),
$\textbf{lexmax}~A$ (lexicographic maximum of each image),
$A \cap B$ (intersection),
$A \circ B$ (composition).

The Barvinok method counts integer points in parametric polytopes of the form
$\{x \in \mathbb{Q}^d \mid Ax \ge Bp + c\}$,
where $A \in \mathbb Z^{m \times d}, B \in \mathbb{Z}^{m\times n}, c \in \mathbb{Z}^m$.
It outputs \textbf{piecewise quasi-polynomials}, where each piece associates a region with a quasi-polynomial (extending ordinary polynomials with ceiling division and modulo operations) counting its integer points.

\begin{definition}[Backward Reuse Interval]\label{dfn:ri-def}
    Let $\mathcal{T}[t]$ denote the data item accessed at timestamp $t$.
    Define $t' = \max\left\{\, s \mid 0 \leq s < t,\ \mathcal{T}[s] = \mathcal{T}[t]\,\right\}$ as the most recent prior timestamp accessing the same data item.
    The backward reuse interval at time $t$ is $\ri_t := t - t' = \left|\, \left\{\, i\ \Big|\ t' < i \leq t\, \right\} \,\right|.$
    
    If no such $t'$ exists, $\ri_t := \infty$ (cold miss).
    We call the corresponding set $\left\{\, i\ \Big|\ t' < i \leq t\, \right\}$ the \textbf{ri-set} at $t$, denoted $\textbf{ri-set}(t)$, with $\ri_t = |\textbf{ri-set}(t)|$.
\end{definition}

\subsubsection{Symbolic RI Analysis}

We construct \textbf{ri-set} as an intersection of affine relations using fiber partitioning. 
A single-valued map \(A: X \to Y\) induces the \textbf{fiber relation} $K_A := \{(x, x') \in X \times X \mid A(x) = A(x')\} = A^{-1} \circ A$, which partitions $X$ into equivalence classes (fibers) of equal image.
Intersecting $K_A$ with another binary relation $R$ yields timestamps that satisfy both constraints.

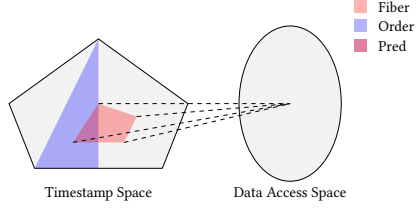
\begin{figure}[htbp]
\centering
\resizebox{0.4\linewidth}{!}{
\begin{tikzpicture}[scale=3]
  \coordinate (P1) at (0,0);
  \coordinate (P2) at (1,0);
  \coordinate (P3) at (1.2,0.5);
  \coordinate (P4) at (0.5,1);
  \coordinate (P5) at (-0.2,0.5);
  \path[fill=gray!10] (P1) -- (P2) -- (P3) -- (P4) -- (P5) -- cycle;
  \draw (P1) -- (P2) -- (P3) -- (P4) -- (P5) -- cycle;

  \draw[fill=gray!10] (2,0.5) ellipse [x radius=0.4, y radius=0.6];

  \path[fill=blue, opacity=0.3] (P1) -- (0.5,0) -- (0.5,1) -- (P1);

  \fill[red, opacity=0.3] (0.3,0.2) -- (0.7,0.2) -- (0.8,0.4) -- (0.5,0.5) -- cycle;

  \draw[dashed] (0.3,0.2) -- (2,0.5);
  \draw[dashed] (0.7,0.2) -- (2,0.5);
  \draw[dashed] (0.8,0.4) -- (2,0.5);
  \draw[dashed] (0.5,0.5) -- (2,0.5);

  \fill[red, opacity=0.3] (2.5,1.2) rectangle (2.6,1.3);
  \node[anchor=west] at (2.65,1.25) {Fiber};
  \fill[blue, opacity=0.3] (2.5,1.05) rectangle (2.6,1.15);
  \node[anchor=west] at (2.65,1.1) {Order};
  \fill[purple, opacity=0.5] (2.5,0.9) rectangle (2.6,1.0);
  \node[anchor=west] at (2.65,0.95) {Pred};

  \node at (0.5,-0.2) {Timestamp Space};
  \node at (2,-0.2) {Data Access Space};
\end{tikzpicture}
}
\caption{Diagram of the \textbf{prev} map construction.}\label{fig:prev}
\Description{A diagram showing the relationship between timestamp space and data access space for the prev map construction, illustrating how memory accesses are ordered in time.}
\end{figure}

Let $\mathcal{T}$ be the timestamp space and $\mathcal{A} : \mathcal{T} \to \mathcal{D}$ map each timestamp to its accessed data item.
The predecessor set is
$$
\textbf{prev}(\mathcal{A}, \mathcal{T}) = \left(\mathcal{A}^{-1} \circ \mathcal{A}\right)\cap (\mathcal T \succ \mathcal T),
$$
intersecting the fiber relation with lexicographic order (see \Cref{fig:prev}).
The \textbf{ri-set} is then
$$
\textbf{ri-set}(\mathcal{A}, \mathcal{T}) = \left(\textbf{lexmax}~\textbf{prev}(\mathcal{A}, \mathcal{T}) \circ (\mathcal T \prec \mathcal T)\right)\cap  (\mathcal T \succeq \mathcal T).
$$
Since \textbf{ri-set} is an affine map, we compute each RI value via Barvinok as $\ri_t = |\textbf{ri-set}(t)|$. 

\subsubsection{Re-count RI Distribution with Re-parametrization}\label{subsubsect:affine-set-counting}
Computing all RI values as quasi-polynomials is not enough; we must also count the occurrences of each value.  Enumerating all RI values is not feasible for symbolic parameters.  All hope is not lost -- we can exploit the \emph{piecewise structure} of the domain to count efficiently,  aggregating occurrences within each affine region instead of iterating over individual timestamps.

\Cref{algo:reparam} takes piecewise quasi-polynomial input $\{(S_i, \mathit{QP}_i)\}$, where $S_i$ is a subset of timestamps and $\mathit{QP}_i$ is its RI formula.
Since disjoint subsets may share RI values, we group them by value and sum their cardinalities.
The algorithm operates in two phases: 
(1) \emph{Reparameterization} (lines 4--13): For constraints $Ax \ge Bp + c$ with dimension vector $x$ and parameter vector $p$, we move any dimension $x_i$ appearing in $\mathit{QP}_i$ from $x$ to $p$, resulting in $x'$ and $p'$. The transformed constraints $A' x' \ge B' p' + c$ preserve the polytope structure, normalizing quasi-polynomials to group subsets by RI value in map $T$.
(2) \emph{Counting} (lines 14--17): We compute cardinality $c_q$ for each grouped subset to obtain occurrence counts, outputting distribution pairs $(S_j, c_j)$ with symbolic counts $c_j$.

\begin{algorithm}[htbp]
\small
\setlength{\abovedisplayskip}{3pt}
\setlength{\belowdisplayskip}{3pt}
\begin{algorithmic}[1]
\Require Piecewise quasi-polynomial $\{(S_i, \mathit{QP}_i)\}_{i=1}^n$ where $S_i$ is a domain subset and $\mathit{QP}_i$ is a quasi-polynomial
\Ensure Distribution $\mathit{Dist}$ as $\{(S_j, c_j)\}$ where $c_j$ is the cardinality formula
\State Initialize map $T$ with default value $\emptyset$ for each entry
\For{each $i \in \{1, \ldots, n\}$}
    \State Let $D_i, P_i$ be the dimension and parameter domains of $S_i$
    \State $D' \gets \emptyset$, $P' \gets P_i$
    \For{each $d \in D_i$}
        \If{$d \in \mathrm{var}(\mathit{QP}_i)$}
            \State $P' \gets P' \cup \{d\}$ \Comment{Move dimension to parameters}
        \Else
            \State $D' \gets D' \cup \{d\}$
        \EndIf
    \EndFor
    \State Let $S'_i$ be the reprojection of $S_i$ onto $(D', P')$
    \State Let $\mathit{QP}'_i$ be the normalized $\mathit{QP}_i$ on $S'_i$
    \State $T[\mathit{QP}'_i] \gets T[\mathit{QP}'_i] \cup S'_i$ \Comment{Group subsets by RI value}
\EndFor
\State $\mathit{Dist} \gets \emptyset$
\For{each $q \in \mathrm{keys}(T)$}
    \State Compute $c_q \gets \text{cardinality}(T[q])$ \Comment{Count occurrences of RI value $q$}
    \State $\mathit{Dist} \gets \mathit{Dist} \cup \{(T[q], c_q)\}$
\EndFor
\State \Return $\mathit{Dist}$
\end{algorithmic}
\caption{Re-parameterization for computing RI distribution}
\label{algo:reparam}
\end{algorithm}

\subsubsection{Infinite Repeat}

To model infinite repeat, we add a repetition dimension $r \in [0, R)$ where $R$ is a symbolic parameter representing the number of iterations. This makes both the timestamp space and RI counts grow linearly with $R$. To compute the limiting RI portions as $R \to \infty$, we apply L'Hôpital's rule: since both the numerator (RI count) and denominator (total access count) are polynomials in $R$, the limit equals the ratio of their leading coefficients. This allows efficient symbolic computation of infinite-repeat RI portions without explicit limit evaluation.

\section{Hardness of Reuse Interval Distribution Derivation}

Having established our algorithmic framework, we now examine the inherent computational complexity of RI distribution derivation before evaluating our implementation. This analysis motivates why even efficient tools face fundamental limits on certain program classes.

We demonstrate that deriving RI distributions for affine programs is computationally hard. While one might expect hardness only for programs with complex control flows, we show that the problem remains intractable even for surprisingly simple structures: \emph{branch-free\footnote{By ``branch-free'', we mean there are no if--else statements.} nested loops with polynomial-bounded memory}.

Our reductions exploit a fundamental property: \emph{accessing a memory location intercepts the original trace, splitting long reuse intervals into shorter ones}. This interception mechanism allows us to encode computational problems into RI distributions: a satisfying assignment induces strategic memory accesses that produce short RIs, while unsatisfiable formulas yield only long intervals.

We establish hardness by reducing 3-SAT to RI queries. For a 3-SAT formula
\[
\phi = C_1 \land C_2 \land \cdots \land C_m,
\]
where $C_i = l_{i1} \lor l_{i2} \lor l_{i3}$ and each literal is over variables $x_k$ ($k \in [n]$), we encode each clause by
\[
\left\lceil \frac{v_{i1} + v_{i2} + v_{i3}}{3} \right\rceil,
\qquad
v_{ij} := 
\begin{cases}
x_k & l_{ij}=x_k,\\
1 - x_k & l_{ij}=\lnot x_k.
\end{cases}
\]
The value is $1$ when the clause is satisfied and $0$ otherwise. Since $m \le O(n^3)$, we require only polynomial memory.

\begin{figure}[t]
    \centering
    \begin{subfigure}[t]{0.47\linewidth}
        \centering
        \begin{lstlisting}[basicstyle=\footnotesize\ttfamily]
access(A[m]);
for (int i1 = 0; i1 <= 1; i1++)
  for (int i2 = 0; i2 <= 1; i2++)
    ...
      for (int in = 0; in <= 1; in++)
        access(A[C1 + C2 + ... + Cm]);
access(A[m]);
        \end{lstlisting}
        \caption{Program used in the NP-completeness reduction.}
        \label{fig:np-reduction-prog}
    \end{subfigure}
    \hfill
    \begin{subfigure}[t]{0.47\linewidth}
        \centering
        \begin{lstlisting}[basicstyle=\footnotesize\ttfamily]
for (int i1 = 0; i1 <= 1; i1++)
  for (int i2 = 0; i2 <= 1; i2++)
    ...
      for (int in = 0; in <= 1; in++) {
        access(A[m]);
        access(A[C1 + C2 + ... + Cm]);
        access(A[m + 1]);
      }
        \end{lstlisting}
        \caption{Program used in the \#P-completeness reduction.}
        \label{fig:sp-reduction-prog}
    \end{subfigure}
    \caption{Side-by-side programs for the hardness reductions.}
    \label{fig:two-reduction-programs}
\end{figure}

\begin{theorem}
\label{thm:affine-np}
Determining whether a specific RI value appears in a branch-free affine program with polynomial-bounded memory is NP-complete.
\end{theorem}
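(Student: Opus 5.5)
The plan has two parts: NP-hardness by a polynomial reduction from 3-SAT, and membership in NP by a certificate argument. I will reuse the clause encoding above. Let $C_i=\lceil (v_{i1}+v_{i2}+v_{i3})/3\rceil$ and $S=\lfloor (C_1+\cdots+C_m)/m\rfloor$. Then $S$ is an affine expression in the allowed syntax (only floor/ceiling division by constants), and $S=1$ exactly when the assignment $(i_1,\dots,i_n)$ satisfies $\phi$, with $S=0$ otherwise.

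For hardness I want an RI value that appears \emph{iff} $\phi$ is satisfiable. The program in \Cref{fig:np-reduction-prog} gives the complement: RI $2^n+1$ appears at the final \texttt{A[m]} iff no interception occurs, i.e.\ iff $\phi$ is unsatisfiable. I would therefore switch to a per-iteration gadget in the style of \Cref{fig:sp-reduction-prog}. The loop body over $i_1,\dots,i_n\in\{0,1\}$ would be \texttt{access(A[0]); access(A[1-S]); access(A[2]);}.

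The key step is to enumerate all RIs of this trace and check that the value $1$ arises only as intended. If $S=1$ in some iteration, the second access is \texttt{A[0]}, directly after \texttt{A[0]}, so RI $1$ appears. Every other pair of consecutive accesses to the same item needs at least two intervening positions:
\begin{itemize}
\item \texttt{A[0]} across iterations has RI $\ge 2$;
\item \texttt{A[1]} recurs only in later iterations, so its RI is $\ge 3$;
\item \texttt{A[2]} has RI exactly $3$.
\end{itemize}
Hence RI $=1$ appears iff $\phi$ is satisfiable.

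The program is branch-free, has loop depth $n$, uses $O(1)$ memory, and has size polynomial in $|\phi|$, since $m=O(n^3)$. This gives NP-hardness for the query ``does RI value $v$ appear'' already with the constant $v=1$. The original program of \Cref{fig:np-reduction-prog} is still useful: it shows that the complementary query for $v=2^n+1$ is coNP-hard. I would mention this to explain the figure.

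For membership I would use the following certificate: a timestamp $t$ (a vector of loop indices with polynomially many bits) together with $t'=t-v$ in lexicographic rank. The verifier evaluates the access map $\mathcal{A}$ at both points and checks that they address the same datum. It must also confirm that no timestamp strictly between them touches that datum. This is the emptiness of an integer set $\{s : t'\prec s\prec t,\ \mathcal{A}(s)=\mathcal{A}(t)\}$, i.e.\ the complement of $\textbf{prev}$ restricted to the window.

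This last step is the main obstacle. Integer-set emptiness is itself NP-hard in general, which would put the naive verifier at $\Sigma_2$. My first attempt would be to restrict to the form the theorem intends, which I would make explicit in the statement's scope. That form is a fixed parameterization in which the ri-set is given by the closed-form $\textbf{lexmax}$ of \Cref{subsect:counting-algo}. There, lexmax yields a piecewise affine (quasi-affine) function whose pieces can be evaluated at $t$ in polynomial time once the program shape is fixed.

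Failing that, I would note that the reduction needs only small $v$. For $v$ bounded by a polynomial, the intermediate window can be checked explicitly by stepping through at most $v$ timestamps with the successor function of the lexicographic order. This yields NP-completeness for polynomially bounded RI values, which suffices for the reduction above.
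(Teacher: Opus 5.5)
Your hardness half is sound, and it repairs a flaw in the paper's argument. The paper proves the theorem only with the program of Figure~\ref{fig:np-reduction-prog}. There, as you observe, RI $2^n+1$ appears exactly when $\phi$ is \emph{unsatisfiable}, and the paper says nothing about membership. Your gadget with $S=\lfloor (C_1+\cdots+C_m)/m\rfloor$ is essentially the paper's \#P program (Figure~\ref{fig:sp-reduction-prog}) compressed to $O(1)$ memory, and it gives a genuine many-one reduction at $v=1$.

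The gap is the membership half. Your first fallback fixes the program shape so that \textbf{lexmax} has a polynomially evaluable closed form. That contradicts your own reduction, whose loop depth is $n$, so hardness and membership would concern different problems. Your second fallback assumes a polynomial-time lexicographic successor, which the paper's syntax does not provide. Loop bounds may be quasi-affine in outer indices, so an innermost \textbf{for} $0$ \textbf{to} $S$ executes exactly on satisfying assignments, and finding the next timestamp is itself a SAT search.

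This is not a fixable detail. Consider \texttt{access A[0]}, then loops $i_1,\dots,i_n\in\{0,1\}$ enclosing \textbf{for} $0$ \textbf{to} $S$ \textbf{do} \texttt{access A[1]; access A[2]; access A[3]} \textbf{endfor}, then \texttt{access A[0]}. The final \texttt{A[0]} has RI $3k+1$, where $k$ is the number of satisfying assignments, and every other finite RI equals $3$. So RI $1$ appears iff $\phi$ is unsatisfiable, and the problem is coNP-hard already at $v=1$, for branch-free programs with constant memory. The paper's own program also shows coNP-hardness for binary $v=2^n+1$, even with constant loop bounds. Together with your NP-hardness, membership in NP would force NP $=$ coNP, so no certificate argument can establish the unrestricted statement.

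What your plan does prove is NP-completeness for branch-free programs with \emph{constant} loop bounds and $v$ given in unary (polynomially bounded). There the rank and predecessor of a timestamp are mixed-radix computations. The verifier guesses $t$, walks back $v$ steps, and compares addresses. Both of your hardness programs lie in this class. State that restricted class as the theorem itself, not as a fallback.
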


\begin{proof}
We reduce 3-SAT to RI detection using the program in \Cref{fig:np-reduction-prog}.  
The nested loops enumerate all $2^n$ truth assignments.  
If $\phi$ is satisfiable, at least one iteration evaluates $\sum C_i = m$, so the program accesses $A[m]$ inside the loop. This \emph{intercepts} the trace, producing an RI value $< 2^n+1$.  
If unsatisfiable, every iteration accesses a distinct location, so the only RI between the boundary accesses is exactly $2^n+1$.  
Thus RI-detection is NP-complete.
\end{proof}

\begin{theorem}
\label{thm:affine-sp}
Counting the number of occurrences of a specific RI value in a branch-free affine program is \#P-complete.
\end{theorem}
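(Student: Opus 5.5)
We reduce \#3-SAT, the problem of counting the satisfying assignments of $\phi$, to counting occurrences of one RI value. We use the program in \Cref{fig:sp-reduction-prog}, mirroring the proof of \Cref{thm:affine-np}. Membership in \#P is the easy direction, and we handle it first. The trace has length $3\cdot 2^n$, so each timestamp has an $n$-bit encoding (the loop indices plus the statement position). Given a timestamp $t$, we can check in polynomial time whether $\ri_t = v$. We recompute the accessed address by evaluating the affine expressions, and then check that the access at $t - v$ touches the same address. We also check that no access strictly between them touches it. This last check is the one that needs care, because the gap may be exponentially long. For this program we avoid scanning the gap by reasoning about its structure directly. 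The occurrences of RI value $v$ are therefore the accepting paths of a polynomial-time nondeterministic machine that guesses $t$.

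For hardness, we analyze the trace block by block. Each innermost iteration, i.e.\ each assignment $\alpha \in \{0,1\}^n$, emits the three accesses $A[m]$, $A[s_\alpha]$, $A[m+1]$, where $s_\alpha = \sum_i C_i(\alpha) \in [0,m]$ counts the satisfied clauses. We classify all backward RIs in this trace.
\begin{itemize}
\item \emph{$\alpha$ satisfying} ($s_\alpha = m$). The middle access reuses $A[m]$ from two positions earlier, giving RI $=1$. The following access to $A[m]$ in the next block then has RI $=3$, since the middle access is its most recent prior access.
\item \emph{$\alpha$ not satisfying} ($s_\alpha < m$). The middle access goes to a location $A[s_\alpha]$ with $s_\alpha<m$. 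Such locations are touched only by middle accesses of non-satisfying blocks, so its RI is a multiple of $3$, or infinite for a first touch. In this case the next block's $A[m]$ has RI $=3$ from the previous block's $A[m]$.
\item \emph{$A[m+1]$ accesses}. These always have RI $=3$.
\end{itemize}
Because all intermediate (non-middle) accesses touch $m$ or $m+1$, the RIs of middle accesses to cells $<m$ are $\equiv 0 \pmod 3$. RI value $1$ arises only from a middle access to $A[m]$ within the same block. Hence the number of RIs equal to $1$ is exactly $\#\{\alpha : \phi(\alpha)=1\}$.

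The reduction is polynomial. The program has $n$ loops, polynomially many memory cells ($m+2$ with $m \le O(n^3)$), and one affine subscript of polynomial size. The ceiling-of-division clause encoding is permitted by \Cref{tbl:affine-syntax}. This gives a parsimonious reduction from \#3-SAT, which establishes \#P-hardness.

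The main obstacle is ruling out spurious occurrences of the target value $1$, that is, confirming the case analysis. We also need to check that the first-touch and boundary accesses at the very start do not create extra RI-$1$ events. For example, the very first block, if satisfying, still correctly gives RI $1$. If one worried about Infinite Repeat, imaginary RIs are at least $3$ here, so the count is unaffected. A secondary subtlety is the \#P membership check of ``no intermediate access to the same address''. Here we can restrict the queried value to a constant such as $1$, for which the check is trivial, so membership is immediate for the specific RI value used in the hardness argument.
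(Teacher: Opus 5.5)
Your hardness half is exactly the paper's proof: the program of \Cref{fig:sp-reduction-prog}, target value $1$, and the parsimonious identity that the number of RIs equal to $1$ is the number of satisfying assignments. Your case analysis reaches the right conclusion, although a few numbers are off. The middle access to $A[m]$ reuses the \emph{immediately} preceding access. The $A[m]$ that follows a satisfying block has RI $2$, not $3$. Under Infinite Repeat, the first $A[m]$ of a run can have imaginary RI $2$. None of these values is $1$, so the count is unaffected. Also, with \Cref{dfn:ri-def}, where first touches have RI $\infty$, you do not need Infinite Repeat at all.

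The gap is membership in \#P. Completeness requires the \emph{general} problem, an arbitrary branch-free affine program together with the queried value, to be in \#P. Arguing ``for this program'', or fixing $v=1$ ``for the value used in the hardness argument'', only shows that the image of your reduction is easy. That says nothing about the rest of the class.

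Moreover, the step you flag as needing care genuinely fails. Certifying $\ri_t=v$ requires that \emph{no} access strictly inside a possibly exponentially long window touches the same address, which is a universally quantified, coNP-type condition. Concretely, take the paper's own program in \Cref{fig:np-reduction-prog}. The value $v=2^n+1$, which has only $n+1$ bits, occurs exactly once if $\phi$ is unsatisfiable and zero times otherwise. So a \#P algorithm for the count would put UNSAT in NP.

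Even for the fixed value $1$, affine loop ranges can be empty. Consider this program:
\begin{itemize}
\item access $A[0]$;
\item loop $x_1,\dots,x_n$ over $\{0,1\}$, with inner loops $j_k$ from $0$ to $a_k\cdot x-b_k+1$ around an access to $B[x_1,\dots,x_n,j_1,\dots,j_K]$;
\item access $A[0]$ again.
\end{itemize}
Every $B$ access is a first touch. The final access has RI $1$ exactly when the 0-1 system $a_k\cdot x\ge b_k$ is infeasible, so again NP $=$ coNP would follow.

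So, unless NP $=$ coNP, the problem as stated is not in \#P. What your argument establishes, and likewise the paper's proof, which gives only the reduction, is \#P-hardness. To claim completeness you must restrict the problem. One option is a fixed constant $v$ and loop nests whose ranges are never empty, such as the constant bounds used in the reduction. There the predecessor of a timestamp is computable in polynomial time, and your guess-$t$ machine really does run in polynomial time.
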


\begin{proof}
We reduce \#3-SAT to RI counting using the program in \Cref{fig:sp-reduction-prog}.  
For each satisfying assignment ($\sum C_i = m$), the two consecutive accesses to $A[m]$ produce exactly one RI of value~$1$.  
For unsatisfying assignments, all three accesses touch different addresses, producing no RI of value~$1$.  
Thus the number of RIs of value~$1$ equals the number of satisfying assignments, proving \#P-completeness.
\end{proof}

These results indicate that deriving the full RI distribution is fundamentally difficult—even in branch-free affine programs with polynomial memory. The trace-interception mechanism has enough expressive power to encode NP- and \#P-complete problems, implying that the hardness arises from the expressive loop-and-access structure itself rather than from control-flow complexity or unbounded memory.

As shown in \Cref{sect:evalution}, although the general problem is NP-hard, our compiler achieves practical performance because (1) affine structure yields low-dimensional polytopes, (2) Barvinok’s algorithm performs well in such settings, and (3) most benchmarks exhibit few distinct RI values. The hardness primarily affects extreme cases with high loop dimensionality or adversarially constructed access patterns.

\section{Evaluation}
\label{sect:evalution}

This section evaluates the time cost of the new theory and its compiler implementation, as well as the accuracy in approximating the miss ratio of LRU caches.

\subsection{Experimental Setup}
\label{sec:padding}

We evaluate our approach using two benchmark suites: a collection of 11 Einsum loop kernels \cite{blacher2024einsum} and 30 programs from the Polybench suite \cite{polybench}.  The Einsum loops are commonly used in machine learning kernels and quantum circuit (tensor-network) simulation~\cite{flash,HPC-workload,krause2024hpbp,gray2024hyperopt,ibrahim2022constructing,64hd-q4z5}.  We omit 6 out of 17 example kernels of the Einsum benchmark because their data reuse pattern is trivial to analyze, e.g., cache-block reuse only. Instead of using a full loop, we split the Einsum loops and cache intermediate results that are repeatedly used, as suggested in \cite{gray2024hyperopt}.  An input program is compiled and analyzed as described in \Cref{sec:compiler}.
In addition to the original code, we apply first loop fusion transformations and then the analysis.  We use \texttt{mlir-opt} to apply the \texttt{affine-loop-fusion} pass \cite{mlir_passes}. It works specifically on loops expressed in the MLIR affine dialect.  Due to a known bug in the fusion cost model in previous versions, we require LLVM version 21 or higher.


We use cache simulations by Cachegrind \cite{valgrind}. Execution time data is collected from a 64-core AMD EPYC 7773X processor. Hardware counter experiments are run on a 20-core Nvidia GB10\footnote{We bind test programs to GB10's most capable Cortex-X925 core.}. All programs are compiled with the \texttt{-O3} flag to reduce stack spilling and other removable memory accesses.

We apply array padding manually in all programs to avoid conflict misses. The innermost (contiguous) dimension is padded to $8p$, where $8$ is the cache block size in number of data elements, and $p$ is the smallest prime number such that $8p$ is greater or equal than the original array dimension. The satisfies the alignment that the last dimension is a multiple of block size. For all outer dimensions except for the outermost, we round up to the nearest prime number greater than or equal to the dimension size.


\subsection{Miss Count and Miss Ratio Accuracy}
\label{sect:miss-count}

We first compare our prediction with Cachegrind's simulation.  
\Cref{fig:einsum-misscount} presents a comparison across cache sizes by miss ratio curves. For Einsum loops, the predicted miss ratio curve matches that of simulated fully associative cache, especially for cache sizes over a hundred bytes, for 10 of the 11 tests.  The exception is \texttt{weighted\_model\_counting\_pattern}, which shows a moving-cliff error, where it correctly predicts miss ratios but underestimates the cache size at which the drop occurs (see \Cref{sec:LRU_approx}).  


\begin{figure}
    \centering
    \includegraphics[width=1.0\textwidth]{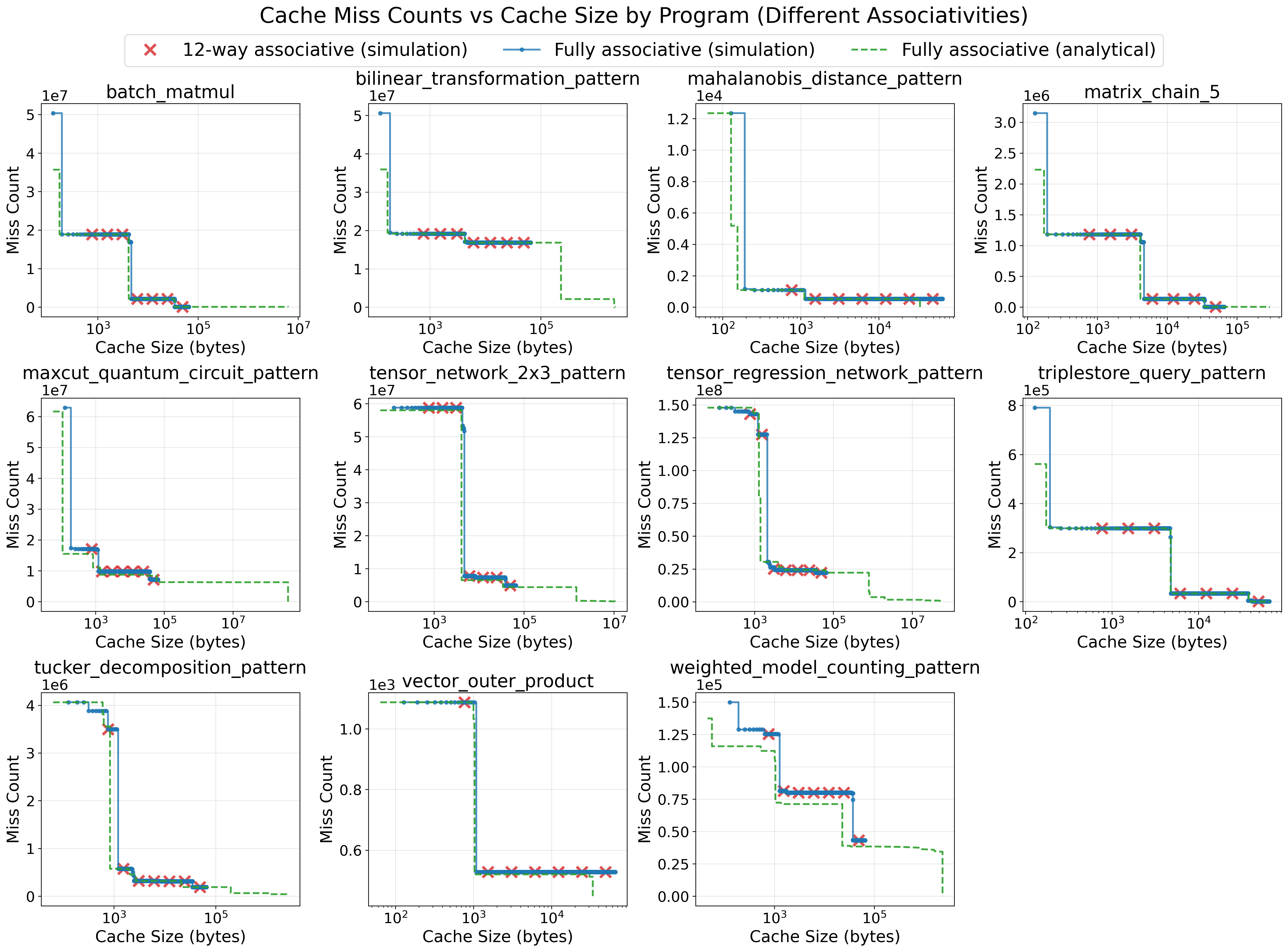}
    \caption{Tensor Contraction Kernel Miss Count Curves}
    \Description{A graph showing the predicted and actual miss ratio curve various tensor operations, using two different cache configurations.}
    \label{fig:einsum-misscount}
\end{figure}

Due to space limits, we use three cache sizes, 6KB, 12KB, and 48KB, for the 30 Polybench tests and show the results together in one table in \Cref{fig:prediction_heatmap}.  For each cache size, we compare fully assocativity and 12-way set associativity.  For each cache configuration, the table shows two miss ratios: the simulation and the prediction.   The last four columns show the comparison of the averages across three cache size.

\begin{figure}[htbp]
  \centering
  \includegraphics[width=\linewidth]{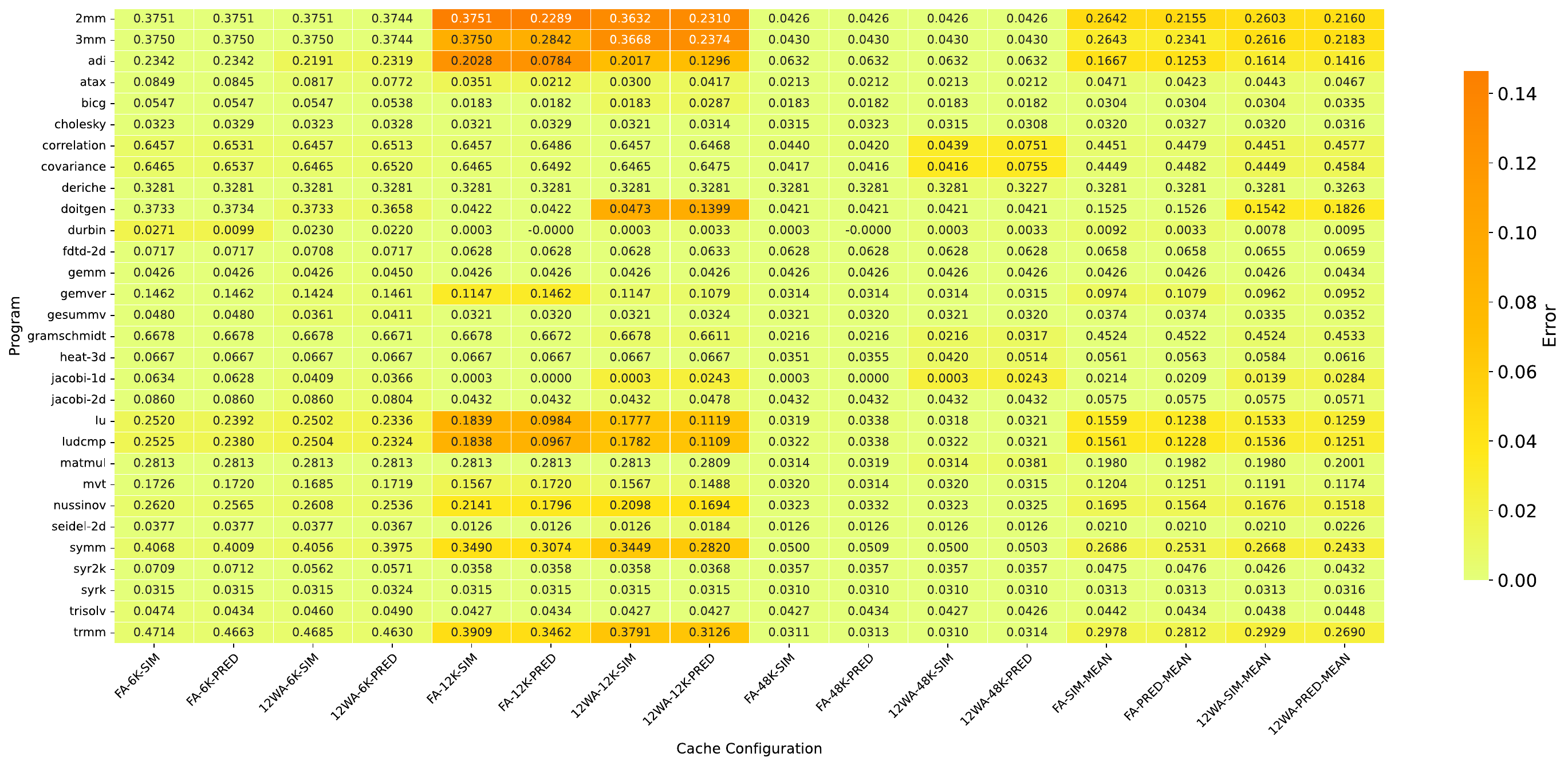}
  \caption{Prediction error across 30 Polybench benchmarks under varying cache configurations. Columns represent fully associative (FA) and 12-way associative (12WA) caches at different sizes. The cell color reflects the amount of prediction error, while the digits in each cell indicate the simulated miss ratio value.}
  \Description{A colored table showing the relative prediction error across 30 polybench programs and various cache configurations. The x-axis shows eight cache configurations alternating between fully associative (FA) and 12-way associative (12WA) caches at four different cache sizes (3KB, 6KB, 12KB, 48KB), plus mean and median statistics. The y-axis lists all 30 polybench programs alphabetically from 2mm to trmm. Each cell displays the absolute error, calculated as the absolute difference between predicted and simulated miss counts divided by total memory accesses. The color intensity (using the Wistia colormap from yellow to purple) indicates the magnitude of the error, with lighter yellow representing lower errors (better predictions) and darker purple representing higher errors. The colored table reveals that prediction accuracy varies across programs and cache configurations, with most programs showing low errors (light yellow cells) indicating accurate predictions, while certain programs and configurations exhibit higher errors (darker cells).}
  \label{fig:prediction_heatmap}
\end{figure}

The table is color coded by the error in prediction, which is the absolute difference between the number of predicted and simulated misses over the total access count.  Across all 180 comparisons,  the maximal error is 14\%, which is the top value of the color bar on the right.  The majority, 60\%, is less than 1\%, and most of these, 45\%, is less than 0.1\%.  Only 13 comparisons, 7.2\% of tests, have an error 5\% or greater, and they all happen at the 12KB cache size.


On average across three cache sizes, the prediction is highly accurate for full associativity, less than 0.01\% in 12 tests, between 0.01\% and 1\% in 9 tests, and between 1\% and 5\% in the remaining 9 tests.  The average error across all 30 tests is 1.1\%.  In comparison, the average miss ratio is 16\%.


\paragraph{Set-associative Cache}
Set conflicts can significantly increase the miss count. In a 12-way set-associative cache, if the stride of array accesses aligns such that the contiguous dimension of the array shares a common factor with 12, the array data may have a biased mapping that only utilize a small portion of the sets.

Our padding scheme (\Cref{sec:padding}) effectively removes conflict misses such that the set-associative cache performs the same as fully associative cache.  This is shown by the miss count curves for the 11 Einsum tests in \Cref{fig:einsum-misscount}, where the miss ratios of set-associative caches, marked by $\times$, fall on the miss ratio curve of the fully associative cache.  


Polybench is tested on three sizes of set-associative caches, shown in color-coded \Cref{fig:prediction_heatmap}, with the last two columns showing the average across three cache sizes.  Overall in average, the miss ratio is similar, 15.6\% in 12-way vs 16.1\% in fully associative cache.  The accuracy is also similar, with less than 0.01\% error in 10 tests, between 0.01\% and 1\% in 12 tests, and between 1\% and 5\% in the remaining 9 tests.  The average error is 1.3\%.


Statistical techniques have been developed to model set-associative cache. 
Smith~\cite{Smith:ICSE76} gave the classic formula for using reuse distances to estimate conflict errors in set-associative caches, which is used in later tools, including the HPCToolkit~\cite{MarinM:SIGMETRICS04}.  There are two other models: Set-RD~\cite{Nugteren+:HPCA14,SenW:SIGMETRICS13} and mapped footprint~\cite{Luo+:MEMSYS18}.  For regular loop code and the high associativity used by modern processors, we found that array padding is sufficient for it to perform similarly to a fully associative cache.



\paragraph{Loop Fusion Optimization}
Loop fusion is widely studied, for example, in ~\citep{DingK:JPDC04,SpaceTime:PLDI02}.
We test our programs after MLIR's \texttt{affine-loop-fusion}, which has two fusion strategies\footnote{Descriptions were summarized from MLIR's document (https://mlir.llvm.org/docs/Passes/\#-affine-loop-fusion)}: producer-consumer fusion and sibling fusion. Producer-consumer fusion focuses on merging loop pairs in which the producer loop writes to a memory reference that the consumer loop subsequently reads. Sibling fusion addresses loop pairs with not necessarily data dependence but those that have the same iteration. We use the default computation overhead threshold ($30\%$) for producer-consumer fusion as shipped in LLVM 21.

Over 86\% of errors in \Cref{fig:prediction_heatmap_fusion} are less than $2\%$, and all are less than 4\% except \texttt{lu} and \texttt{ludcmp}. In these two programs, loop bounds depend on induction variables. We note another case, \texttt{adi}. Before loop fusion, it has a moving-cliff error up to 12\% caused by having multiple loop nests.  The prediction is completely accurate after loop fusion.

\begin{figure}[htbp]
  \centering
  \includegraphics[width=\linewidth]{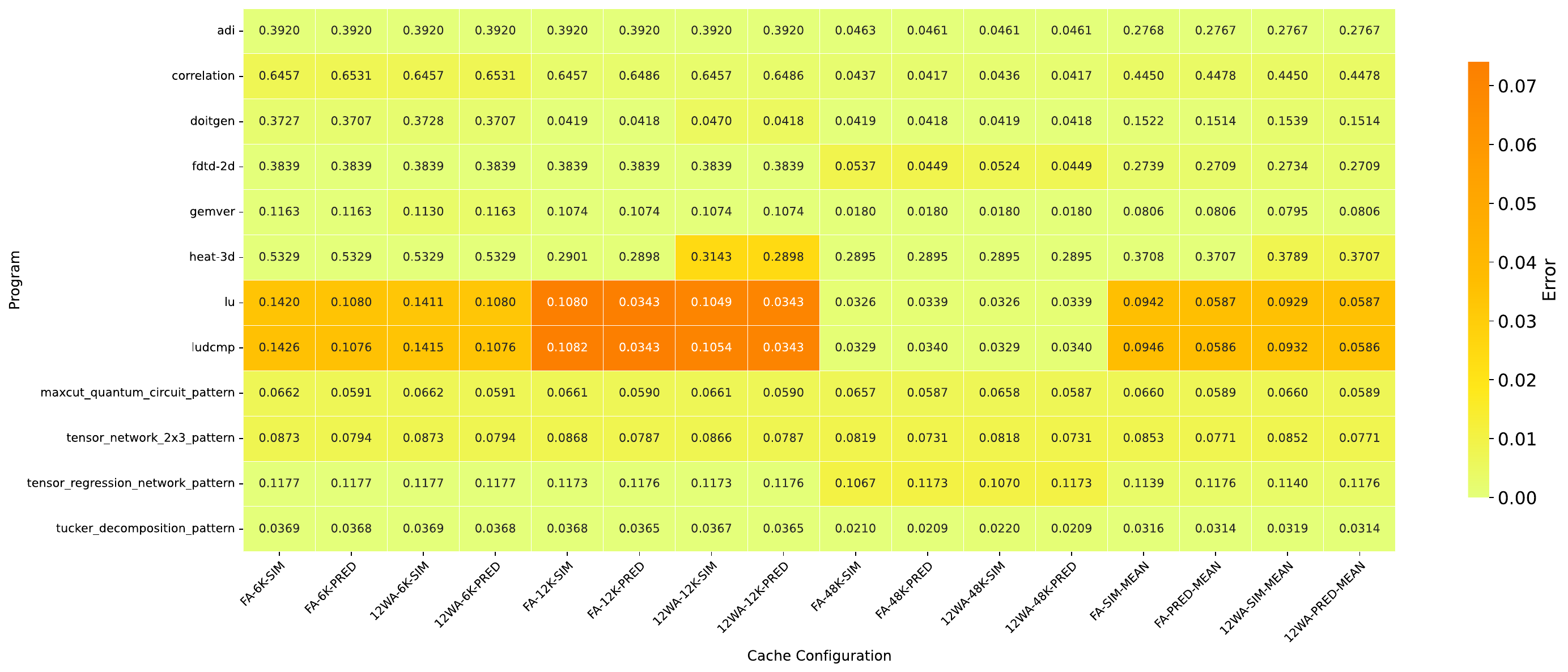}
  \caption{Prediction error across 12 programs affected by MLIR affine loop fusion, under varying cache configurations. Columns represent fully associative (FA) and 12-way associative (12WA) caches at different sizes. The cell color reflects the amount of prediction error, while the digits in each cell indicate the simulated miss ratio value.}
  \Description{A colored table showing the relative prediction error across 12 programs that were affected by the MLIR \texttt{-affine-loop-fusion} pass, under various cache configurations. The x-axis shows eight cache configurations alternating between fully associative (FA) and 12-way associative (12WA) caches at four different cache sizes (3KB, 6KB, 12KB, 48KB), plus mean and median statistics. The y-axis lists the 12 programs included in this dataset. Each cell displays the absolute error, calculated as the absolute difference between predicted and simulated miss counts divided by total memory accesses. The color intensity (using the Wistia colormap from yellow to purple) indicates the magnitude of the error, with lighter yellow representing lower errors (better predictions) and darker purple representing higher errors. The results show that for these 12 programs affected by MLIR’s \texttt{-affine-loop-fusion} transformation, prediction accuracy remains consistent across cache configurations, with most exhibiting low errors (light yellow cells).}
  \label{fig:prediction_heatmap_fusion}
\end{figure}

\paragraph{48KB 12-way Set Associative Cache}
This particular configuration, with 64-byte blocks (8 double-precision floating point numbers) is found on the performance cores of Intel Core processors from the 10th to the 14th generation, from Sunny Cove (in Ice Lake processors) through Golden Cove and Redwood Cove (in Alder Lake and Raptor Lake). In addition, AMD's Zen5 cores use the same associativity and size for its L1 data cache.  The actual cache replacement policy is not public.  Our simulation uses LRU (while \Cref{sec:hw-counters} will show actual miss counts).

The comparisons are shown in \Cref{fig:prediction_heatmap,fig:prediction_heatmap_fusion}.  Before loop fusion, the average miss ratio is 4.1\%, and the prediction error is 0.5\%.  After loop fusion (12 tests), the average miss ratio is 7\%, while the error is 0.4\%, with all but one less than 1\%.  The average error is 0.43\% cross all 41 tests of this cache configuration.

\paragraph{Data Movement Prediction Accuracy}
In standard definition, $x$\% accuracy means that $x$\% memory accesses are predicted correctly whether it is a hit or a miss.  For data movement analysis, the aggregate is required, while the order of hits and misses does not matter.  Thus, we transform a trace of accesses into a sequence of hits and misses and then permute it so all hits precede all misses. Given a predicted miss ratio, we generate a predicted sequence with hits followed by misses at that ratio.  We define \emph{data movement prediction accuracy} as the fraction of elements, including both hits and misses, that match between the permuted sequence and the predicted sequence.  When the miss ratio error is 0.4\%, the predictions align with 99.6\% of the permuted sequence.  The accuracy of data movement prediction is therefore 99.6\%.

In our tests, the miss ratio may differ by orders of magnitude.  The average accuracy, when defined on the miss ratio or the hit ratio, can be made unfair by one test, for example, a low relative accuracy but on a very small miss ratio.  In comparison, the data movement accuracy is normalized the same way across all tests, regardless of their hit or miss ratio.

\paragraph{The Real Effect of Imaginary Reuses}
Imaginary reuses are crucial for accuracy.  In an early, incomplete evaluation including 23 PolyBench kernels tested against simulation on 32KB fully associative cache, 
without Imaginary Reuses, the prediction error is up to 19.88\% in the worst case and 2.15\% on average. With Imaginary Reuses, the maximum error is reduced to 1.53\%, and the average error drops to 0.18\% --- a more than tenfold improvement in accuracy.


\subsection{Analysis and Prediction Costs}
\label{sec:speed}

\Cref{fig:speed} shows the time takes an EPYC 7773X to derive cache polynomials and to use them in prediction.  It labels the former as \emph{construction} and latter as \emph{prediction}.

\begin{figure}[htbp]
  \centering 
  \includegraphics[width=\linewidth]{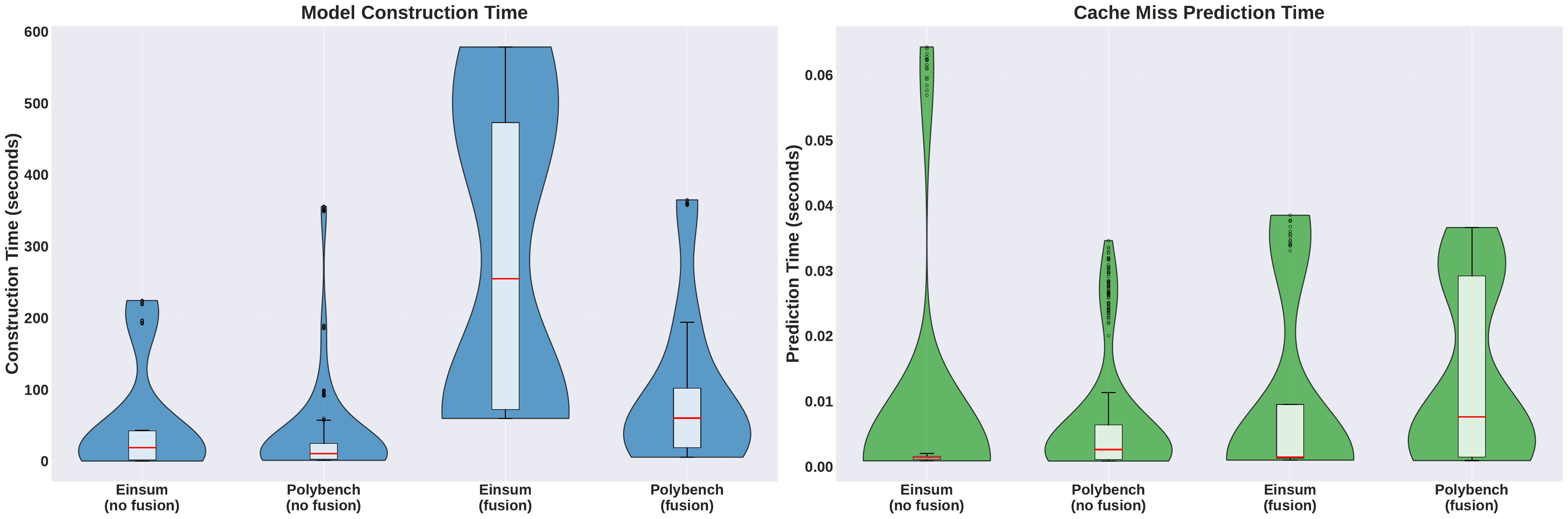}
  \caption{Violin plots of timing distributions for the model construction and prediction phases. 
  The scales differ significantly between the two phases, hence they are plotted separately for clarity.}
  \label{fig:speed}
  \Description{Violin plots comparing timing distributions between construction and prediction phases, showing that construction is much slower while prediction is nearly instantaneous.}
\end{figure}

For derivation, most programs without fusion run under $50$ seconds, with its median around $65$ seconds.  Long derivation times occur in \texttt{maxcut\_quantum\_network\_pattern}, with $12$ nesting depth, and \texttt{tensor\_regression\_network\_pattern} with $11$-nested loops, due to the high dimensionality of their timestamp spaces. Programs with loop fusion generally show an increase in \texttt{construction} time, as producer-consumer fusion may make the program more complex. 

Once the polynomials are derived, prediction is instantaneous —- all programs complete within 65 milliseconds. The \textit{prediction} time is independent of the complexity of the program itself but correlates with the number of symbolic terms.  We observe that loop fusion has minimal effects on \textit{prediction} time.

\subsection{Comparison with Hardware Counters}
\label{sec:hw-counters}

We validate our prediction methodology by comparing simulation results against hardware performance counters on Nvidia GB10's Cortex-X925. We collect measurements using ARM PMU events \texttt{l1d\_cache\_refill} and \texttt{l1d\_cache}. The Cortex-X925 employs a 64KiB 4-way set-associative L1d cache. To reduce measurement noise inherent in hardware counters for small workloads, we limit our analysis to test cases with refill counts exceeding 600,000. All experiments use the padding configuration described in \Cref{sec:padding}.

\Cref{fig:hwct} demonstrates strong agreement between simulation and hardware counter measurements. While \texttt{fdtd-2d}, \texttt{correlation}, and \texttt{covariance} exhibit slightly higher relative differences, the absolute discrepancies remain small.  In \texttt{heat-3d}, prediction is lower simulation because of the moving-cliff error (\Cref{sec:LRU_approx}), but the simulation matches with the hardware counter result.  In all other cases, our predictions are accurate compared to both the simulation and the hardware measurements.

\begin{figure}[htbp]
  \centering
  \includegraphics[width=\linewidth]{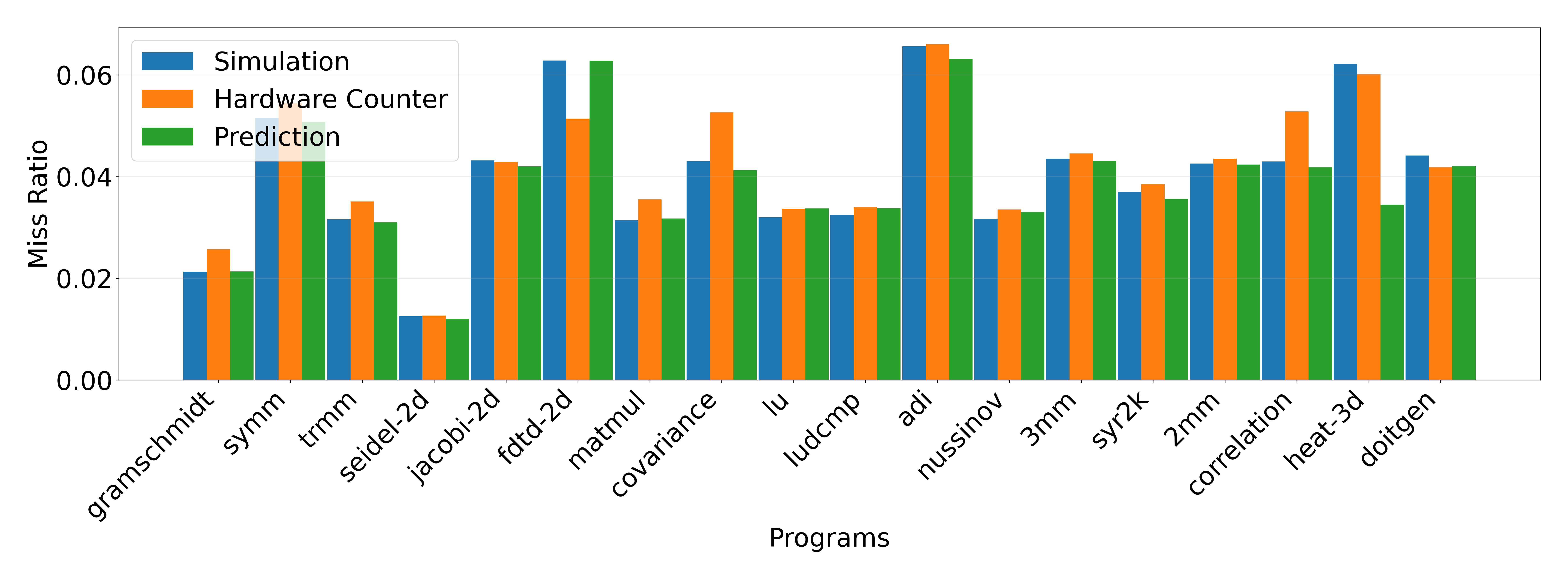}
  \caption{Comparing hardware-counter result with prediction and simulation}
 \label{fig:hwct}
 \Description{A grouped bar chart comparing miss ratios for several benchmark programs
across three measurement methods: simulation, hardware performance counters,
and symbolic prediction. Each program has three bars clustered together.
Across all programs, the predicted miss ratios (green bars) closely follow the
simulation results (blue bars), usually falling between the simulation and
hardware-counter values. The hardware-counter measurements (orange bars) tend
to be slightly higher than simulation for many benchmarks. The overall trend
shows that symbolic prediction tracks simulation with small error across all
benchmarks, demonstrating consistent alignment between predicted and empirical
miss ratios.}
\end{figure}

\subsection{Cache Min-Max Scaling}
\label{sec:min-max}

In analyzing cache demand, a well-known rule-of-thumb is the \emph{$\sqrt{2}$ rule}, which states that \emph{if you double the problem size, you need to multiply the cache size by $\sqrt{2}$ to maintain the same cache hit ratio}\citep{Hartstein+:JILP08}.  This rule has been empirically validated across workloads in database systems (scaling buffer pool sizes), file system caches, and Web caching. 

Cache polynomials can be used to derive the cache scaling like the $\sqrt{2}$ rule.  We define the \emph{cache min-max scaling} as: the minimal cache size required to bound the maximum miss ratio below a given threshold.  For a program, its min-max scaling gives the smallest cache size that guarantees the miss ratio never exceeds a specified upper bound.

Table~\ref{tbl:scaling} shows the scaling for naive matrix multiplication.  The two data columns show the maximal miss ratio on the left and the minimal cache size on the right.  They are copied from Table~\ref{tbl:ana-matmul}, where the cache size and miss ratio polynomials are directly copied here to give the min-max scaling.  Since the cache size must be integral, we take the ceiling of the constant terms when copied from Table~\ref{tbl:ana-matmul}.


\begin{table}[!htbp]
\renewcommand{\arraystretch}{1.2}
\caption{Cache Min-Max Scaling of Naive Matrix Multiplication}
\label{tbl:scaling}
\centering
\begin{tabular}{c|cc}
\toprule
Index & min cache size & max miss ratio \\ \hline
0 & $0$ & $1$ \\
1 & $1$ & $\frac{3}{4}$ \\
2 & $3$ & $\frac{1}{2} + \frac{1}{32n}$ \\
3 & $4$ & $\frac{9}{32} + \frac{1}{32n}$ \\
4 & $\frac{9n}{8} - 9$ & $\frac{1}{4} + \frac{1}{16n}$ \\
5 & $\frac{9n}{8} + 2$ & $\frac{1}{32} + \frac{1}{16n}$ \\
6 & $\frac{n^2}{8} + \frac{3n}{8} - 2$ & $\frac{3}{32n}$ \\
\bottomrule
\end{tabular}
\end{table}


The top row shows the starting point: no cache, no hits.  
With just 1 block, we bound the max miss ratio to 0.75. As we increase the cache size to 3 and then 4 blocks, the miss ratio drops to at most $\frac{1}{2}+\frac{1}{32} \approx 0.53$ and then to 10/32 (~0.31). When the required cache size scales linearly with $n$, the miss ratio is bounded as low as 3/32 (~0.09). This is the $\sqrt{2}$ rule.  Finally, the minimal cache size is precisely $\frac{n^2}{8} + \frac{3n}{8} - 2$ to capture all data reuses, that is, all cache misses are cold-start misses.

Row 4 and 5 both scale by the $\sqrt{2}$ rule.  To maintain the maximal miss ratio, the cache size increases linearly with $n$, while the data size is quadratic in $n$.  Each time the data size doubles, the minimal cache size increases by $\sqrt{2}$.

Compared to $\sqrt{2}$ rule, the min-max scaling is fully precise in the exact cache size and miss ratio.  The precision shows important differences.  First, the miss ratio may not be unqiue.  For example, Row 4 and 5 both scale by the same $\sqrt{2}$ rule, but the miss ratio differs by a factor of two.  Second, the miss ratio may not be constant.  At Row 5, the miss ratio is $\frac{1}{32}+\frac{1}{16n}$, which decreases by a factor of three from $\frac{3}{32}$ to $\frac{1}{31}$ as $n$ increases from 1 to infinity.  

\section{Related Work}
\label{sec:rel}


\paragraph{Trace-based Locality Characterization}
The iterative method to compute locality comes originally from the working-set theory.  It targets unending stochastic processes.  \citet{DenningS:CACM72} made three assumptions: (1) the sequence is infinite, (2) the underlying stochastic mechanism is stationary, and (3) the events become uncorrelated as they are far apart.  The working-set theory computes the limit of the working-set size.  For example, a \emph{nonrecurrent} page is referenced at most a finite number of times.  They make no contribution to the limit working-set size.  As we refer to as the cold-miss dilemma (\Cref{sec:inf-repeat}), this assumption does not hold in program analysis where there is a nonzero portion of infinite RIs.

For a finite-length execution trace, the Higher Order Theory of Locality (HOTL) \citep{Xiang+:ASPLOS13} defined the footprint as the average working-set size of all windows of the same length (which can be computed in linear time for all window lengths).  The effect of first-touch accesses is included in the footprint as computed by the Xiang formula~\citep{Xiang+:PACT11}, which uses first- and last-access times but not infinite RIs.

The fastest techniques in practice are based on RIs, because RIs can be efficiently sampled~\citep{BeylsD:ICCS04,Wang+:CCGrid15,Xiang+:ASPLOS13}.
\citet[\S 3.3]{Yuan+:TACO19} showed that these models are mathematically equivalent to Denning Recursion.  These include the time-to-RD conversion by the Shen-Shaw formula~\citep{Shen+:POPL07} and its uses in the programming tool SLO~\citep{BeylsD:HPCC06} and Featherlight RD using hardware counters~\citep{Wang+:HPCA19}, StatStack~\citep{EklovH:ISPASS10}, and Average Eviction Time (AET) based analysis of storage caches~\citep{Hu+:TOS18,Pan+:TOS21,Xiao+:ICS23}, and multi-level caches~\citep{Ye+:TACO17}.  While the Shen-Shaw formula and StatStack are based on statistical reasoning, and AET is based on a kinetic model of cache eviction, they are all fundamentally variations of Denning Recursion.  


\paragraph{Problem Size Scaling}
By profiling results for a number of program inputs, a common approach is to extrapolate and predict the effect of other program inputs.  Such techniques have been developed for both program analysis~\citep{Zhong+:TOPLAS09,Fang+:PACT05,MarinM:SIGMETRICS04} and architecture or cache scaling analysis~\citep{Wu+:ISCA13}. Profiling reuse distance incurs a high runtime cost, which can be reduced by approximation~\citep{Olken:LBL81,CascavalP:ICS03,Zhong+:TOPLAS09,Wires+:OSDI14}, sampling~\citep{ZhongC:ISMM08,Schuff+:PACT10} and parallelization~\citep{Schuff+:PACT10,Cui+:IPDPS12,Niu+:IPDPS12}.  These techniques model LRU directly and do not have moving-cliff errors. Being data driven, they are not limited to loop code.  


\paragraph{Affine Loop Analysis}
\citet{805152} presented an analytical approach to modeling for both set-associative and fully-associative caches. They targeted perfectly nested loops with constant loop bounds and affine array references. Their approach represents each array reference's physical memory location (with the addition of offset between the arrays) as an affine expression and uses modulo arithmetic to determine how memory accesses map into cache sets, characterizing interference through an overlap metric that counts cache lines competing for each set. Their approach achieves an accurate prediction, typically within $15\%$ of the simulation result. In comparison, our approach provides comparable accuracy while handling symbolic loop bounds, imperfectly nested loops, branches, and variable cache block sizes. 

Prior techniques \cite{Kelly+:LCPC99,BeylsD:JSA05,GMM:TOPLAS99,Bao+:POPL18,Falcon:PLDI24, 10.1007/s00453-006-1231-0} have shown that the reuse distance (LRU stack distance) computation problem for certain affine programs can be converted into a counting problem involving integer points in the union or intersection of parametric polytopes.  The range of consecutive accesses can be modeled as an affine constrained set, and reuse distance values can be computed as the cardinality of this set. 
This work also uses integer set for affine loops, but the purpose is reuse intervals rather than cache behavior.  We have shown that the problem is NP-hard, so there is no polynomial-time solution unless P=NP.  Our solution takes a similar approach as previous techniques to identify data reuse.  For a data reuse, it is simple to obtain its RI as the length of the reuse window.

Previous work on PolyCache \cite{Bao+:POPL18} reports runtime performance across Polybench kernels.
On their system, the mean time is 76.2 seconds, median 37.0 seconds, minimum 1 second (\texttt{jacobi-1D}, \texttt{seidel}), and maximum 323 seconds (\texttt{symm}).  In comparison, our compiler takes 65 seconds on average for derivation and less than 0.1 second for prediction.  
PolyCache requires numerical loop bounds, cache size, and cache block size.  It does not have a symbolic derivation phase as our method does.

Finally, previous techniques do not express locality in polynomial form. The min-max scaling in \Cref{tbl:scaling} is the first demonstration of automatically deriving cache-performance scaling expressed through quadratic and reciprocal functions.  In contrast to traditional approaches that rely on empirical rules such as the $\sqrt{2}$ rule for capacity scaling~\citep{Hartstein+:JILP08}, min-max scaling captures the exact miss ratio, which can be inherently non-constant.


\section{Summary}

This paper introduces algebraic locality, a theory that derives cache size and miss polynomials from symbolic RI distributions.  It uses Infinite Repeat and converts first-touch accesses into imaginary reuses, yielding a complete RI distribution. Applying Denning Recursion then produces cache-size and miss-ratio polynomials in closed-form expressions that capture locality as algebraic functions of both program and cache parameters. In addition, 
we build a compiler for the affine MLIR dialect that converts loop nests into parametric polytopes and computes symbolic RI distributions using integer-set methods and Barvinok decomposition. 

On 41 scientific kernels and tensor operations from PolyBench and Einsum suites, deriving cache polynomials takes 41 seconds on average for all test before loop fusion and 224 seconds for 12 fused kernels.  After derivation, predicting cache misses for any program input and cache configuration takes under 1 ms. The approach achieves an average accuracy of 99.6\% for the typical L1D set-associative cache.  In addition, the paper shows the first compiler analysis for cache-performance scaling expressed symbolically in quadratic and reciprocal expressions.  The min-max scaling for affine loops is fully precise, unlike traditional scaling rules such as the $\sqrt{2}$ rule, well known for database systems (scaling buffer pool sizes), file system caches, and Web caching.

\begin{acks}

The authors wish to thank Jack Cashman for early participation in the project and Sree Pai 
for help with the presentation of the paper.



\end{acks}

\bibliographystyle{ACM-Reference-Format}
\bibliography{Bib/yifan, Bib/extra, Bib/all}

\end{document}